\newcommand{\ra}[1]{\renewcommand{\arraystretch}{#1}}
\newtheorem{theorem}{Theorem}
\newtheorem{proposition}[theorem]{Proposition}
\newtheorem{lemma}[theorem]{Lemma}
\newtheorem{corollary}[theorem]{Corollary}
\newtheorem*{definition}{Definition}
\newtheorem*{definition*}{Definition}
\theoremstyle{definition}
\newtheorem*{remark}{Remark}
\crefname{equation}{Eq.}{Eqs.}
\newcommand{\Ex}[2]{\mathop{\mathbb{E}}\displaylimits_{#1}\left
[ #2 \right ]}
\newcommand{\floor}[1]{\left\lfloor #1 \right\rfloor}
\newcommand{\ceil}[1]{\left\lceil #1 \right\rceil}
\newcommand{\F}{\mathbb{F}}
\newcommand{\trans}[1]{#1^\intercal}
\newcommand{\m}{\mathsf{mat}}
\newcommand{\ve}{\mathsf{vec}}
\newcommand\inner[2]{\left\langle #1, #2 \right\rangle}
\newcommand\dH[2]{d_H\left(#1,#2 \right)}
\DeclareMathAlphabet{\mathcal}{OMS}{cmsy}{m}{n}
\newcommand{\relmiddle}[1]{\mathrel{}\middle#1\mathrel{}}
\newcommand{\T}{T}
\newcommand{\umv}{\trans{\mathsf{u}}\mathsf{Mv}}
\newcommand\nth[1]{#1^\mathrm{th}}
\title{Equivalence of Systematic Linear Data Structures and \\ Matrix Rigidity}
\begin{document}
\date{\today}
\author{Sivaramakrishnan Natarajan Ramamoorthy\thanks{Supported by the National Science Foundation under agreement CCF-1420268.} \\ 
	{\small University of Washington, Seattle} \\
	{\small \tt sivanr@cs.washington.edu}
	\and 
	Cyrus Rashtchian \\ {\small University of California, San Diego} \\ {\small \tt crashtchian@eng.ucsd.edu}  }

\maketitle

\begin{abstract}
Recently, Dvir, Golovnev, and Weinstein have shown that sufficiently strong lower bounds for linear data structures would imply new bounds for rigid matrices. However, their result utilizes an algorithm that requires an $NP$ oracle, and hence, the rigid matrices are not explicit.
In this work, we derive an equivalence between rigidity and the systematic linear model of data structures.
For the $n$-dimensional inner product problem with $m$ queries, we prove that lower bounds on the query time imply rigidity lower bounds for the query set itself. 
In particular, an explicit lower bound of $\omega\left(\frac{n}{r}\log m\right)$ for $r$ redundant storage bits would yield better rigidity parameters than the best bounds due to Alon, Panigrahy, and Yekhanin.
We also prove a converse result, showing that rigid matrices directly correspond to hard query sets for the systematic linear model.
As an application, we prove that the set of vectors obtained from rank one binary matrices is rigid with parameters matching the known results for explicit sets. This implies that the vector-matrix-vector problem requires query time $\Omega(n^{3/2}/r)$ for redundancy $r \geq \sqrt{n}$ in the systematic linear model,
improving a result of Chakraborty, Kamma, and Larsen. Finally, we prove a cell probe lower bound for the vector-matrix-vector problem in the  high error regime, improving a result of Chattopadhyay, Kouck\'{y}, Loff, and Mukhopadhyay. 
\end{abstract}

\newpage

\section{Introduction}
A matrix is {\em rigid} if it is far in Hamming distance from low rank matrices; it is {\em explicit} if its entries are computable in polynomial time. A classic result of Valiant proves that explicit rigid matrices imply super-linear lower bounds for linear circuits~\cite{Valiant77}, a major open problem in computational complexity~\cite{shpilka2010arithmetic, viola2009power}. Implications of new lower bounds for communication complexity and other models are also known~\cite{Lokam09, wunderlich2012theorem}.  Unfortunately, the current bounds for explicit matrices are very far from the required parameters~\cite{Friedman93, ShokrollahiSS97}, and natural candidates (e.g., Fourier and Hadamard matrices) have been discovered to be less rigid than desired~\cite{alman2017probabilistic, dvir2017matrix, dvir2019fourier}.
This motivates alternative avenues for constructing rigid matrices.  Recently, multiple connections between data structures and circuits have arisen~\cite{brody2015adapt, Corrigan-GibbsK18, DvirGW19, Viola18}. The premise of these results is that hard problems for these models may shed new light on rigid matrices and circuits. 
We take a similar angle, studying a generic linear problem for a model that resembles a depth-two circuit with linear gates. 

Valiant's result concerns arithmetic circuits computing the linear map $v \mapsto Mv$ for a matrix~$M$. In other words, the circuit computes the inner products between $v$ and the rows of $M$. We study a related data structure problem, the {\em inner product problem}. The task is to preprocess an $n$-bit vector~$v$ to compute inner products $\inner{q}{v}$ over $\F_2$ for queries $q \in Q$, where $Q \subseteq \F_2^n$ is the {\em query set}. This problem generalizes the prefix-sum problem~\cite{gal2007cell} and vector-matrix-vector problem~\cite{ChakrabortyKL18, LarsenW17}.

We consider solving this problem using a restricted data structure model, the {\em systematic linear model}. 
This model may only store $v$ verbatim along with a small number $r \ll n$ of {\em redundant} bits, which are the evaluations of $r$ linear functions of $v$.
To compute $\inner{q}{v}$ for $q \in Q$, the query algorithm must output a linear function of these $r$ bits along with any $t$ bits of $v$, where $t$ is the {\em query time}.
We motivate this model with a simple upper bound. 
Suppose that the query set $Q$ happens to be close to an $r$-dimensional subspace $U$. More precisely, assume that $\dH{q}{U}  \leq t$ for any $q \in Q$, where $\dH{q}{U} := \min_{u \in U}{ \dH{q}{u}}$ and $\dH{q}{u}$ denotes the Hamming distance. The systematic linear model will store $r$ bits that correspond to inner products between $v$ and some $r$ vectors that form a basis for $U$.
The query algorithm computes $\inner{q}{v}$ by invoking the identity  $\inner{q}{v} = \inner{u}{v} + \inner{q-u}{v}$, using any vector $u \in U$ with $\dH{q}{u} \leq t$. Indeed, the $r$ precomputed bits suffice to determine $\inner{u}{v}$, and at most $t$ bits of $v$ are needed to calculate $\inner{q-u}{v}$.

We observe that rigidity exactly captures the complexity of the inner product problem in the above model.
This connection uses a notion of rigid sets, defined by Alon, Panigrahy and Yekhanin~\cite{AlonPY09}. Our result shows that an efficient algorithm exists in the above model if and only if the query set is not rigid in their sense. Conversely, it is possible to derive new rigidity lower bounds by proving lower bounds for {the} systematic linear model. A parameter of interest is the size of the rigid set, which corresponds to the number of queries in the inner product problem.

Dvir, Golovnev, and Weinstein also demonstrate a connection between rigidity and a different linear model, which is a restriction of the cell probe model~\cite{DvirGW19}. This model stores $s \geq n$ linear functions, and the query algorithm outputs a linear function of $t$ of these $s$ bits.
For the inner product problem with query set $Q$, they show that a lower bound for linear data structures leads to a semi-explicit rigid set. When $|Q|=m$, their result uses a $\mathsf{poly}(m)$ time algorithm that requires access to an $NP$ oracle. 
Compared to their work, our connection preserves explicitness and offers a two-way equivalence via the systematic linear model. In particular, when $r = \Theta(n)$, a lower bound of $t = \omega(\log m)$ in the systematic linear model implies that $Q$ is rigid with better parameters than known results. Their work requires a lower bound of $t = \omega(\log m \log n)$ against the linear model, and the resulting set is not explicit. 
Our results also extend to show that linear data structure lower bounds lead to explicit rigid matrices. However, compared to the work of Dvir, Golovnev, and Weinstein, we require stronger lower bounds to achieve new rigidity parameters.

As an application of our framework, we provide new results for 
the vector-matrix-vector problem. The task is to preprocess a 0-1 matrix~$M$ to compute $\trans{u}Mv$ when given vectors $u,v$ as the query. The boolean semiring version of this problem has received much recent attention due to connections to the online matrix-vector multiplication conjecture~\cite{HenzingerKNS15}. Moreover, this problem has motivated the study of data structures for a super polynomial number of queries, even when the output is binary~\cite{ChakrabortyKL18,ChattopadhyayKLM18}. Other prior work has either studied binary output problems with $\mathsf{poly}(n)$ queries (see e.g. \cite{Patrascu11, PatrascuT10}) or achieved better lower bounds by looking at multi-output problems (see e.g. \cite{CliffordGL15, Larsen12a}). In general, the vector-matrix-vector problem is a good testbed for proving better data structure lower bounds, because linear algebraic tools could provide new insights.

The $\F_2$ variant of this problem specializes the inner product problem because $\trans{u}Mv$ equals the inner product 
of $u\trans{v}$ and $M$ (viewed as vectors). 
The query set consists of $\sqrt{n}\times\sqrt{n}$ matrices with rank one; its size $m$ satisfies $\log m = \Theta(\sqrt{n})$.
As another contribution, we lower bound the rigidity of this set, and consequently, we obtain a query time lower bound of $\Omega(\frac{n}{r} \log m) = \Omega(n^{3/2}/r)$ for the systematic linear model with redundancy $r \geq \sqrt{n}$. Any asymptotically better lower bounds for this problem (in the systematic linear model) would directly imply that this query set is rigid with better parameters than the currently known results for explicit matrices~\cite{AlonC15, AlonPY09}. 

As a final result, we prove a new cell probe lower bound for the vector-matrix-vector problem, without restrictions on the data structure. Our result
improves the current best lower bound due to Chattopadhyay, Kouck\'{y}, Loff, and Mukhopadhyay~\cite{ChattopadhyayKLM18}. Our lower bound matches the limit of present techniques and achieves the current best time-space trade-off in terms of query set size.

\subsection{Rigid sets, systematic linear model, and the inner product partial function}
 
Throughout, let $m=m(n)$ and $t = t(n)$ and $r = r(n)$ denote positive integers, 
with $m \geq n \geq t,r$. Alon, Panigrahy and Yekhanin defined the following notion of a rigid set~\cite{AlonPY09}.

\begin{definition*}[Rigid Set]
	A set $Q \subseteq \F_2^{n}$ is $(r,t)$-rigid if for every subspace $U \subseteq \F_2^{n}$ with dimension at most $r$, some vector $q \in Q$ has Hamming distance at least $t$ from all vectors in $U$, that is, $\dH q U \geq t$.
\end{definition*}
\noindent
We  define $(r', t')$-rigid for non-integral $r', t'$ to mean $(\floor{r'}, \ceil{t'})$-rigid. It will be convenient to equate a set $Q$ with a matrix $M_Q$ by arranging vectors in $Q$ as rows in $M_Q$ in any order.
If $Q$ is $(r,t)$-rigid and $|Q| = m$, then the corresponding matrix $M_Q \in \F_2^{m \times n}$ is rigid in the usual sense: for any rank $r$ matrix $A$, some row in $(M_Q - A)$ contains at least $t$ nonzero entries.
Hence, we may refer to rigid sets and rigid rectangular matrices interchangeably. A matrix in $\F_2^{m \times n}$ (or a set of $n$-dimensional vectors) is {\em explicit} 
if every entry can be computed in $\mathsf{poly}(n)$ time. 

A random $m \times n$ matrix with $m = \mathrm{poly}(n)$ will be $(\epsilon n , \delta n/\log n )$-rigid with high probability for some constants $\epsilon, \delta \in (0,1)$. 
The key challenge here is to construct explicit rigid matrices, because they provide circuit lower bounds for functions that can be described in polynomial time~\cite{Valiant77}. 
Alon, Panigrahy and Yekhanin~\cite{AlonPY09} followed by Alon and Cohen~\cite{AlonC15} exhibit multiple examples of explicit $m \times n$ matrices that are $(r,t)$-rigid with 
\begin{align}
t\geq \min\left \{ \frac{cn}{r} \log \frac{m}{r},\ n \right\} \label{eqn:rigidity-bestlb}
\end{align} where $m\geq n$ and $c$ is a constant. 
Note that when $r=\epsilon n$, the current best bound is $t = \Omega\left(\log \frac{m}{n}\right)$. For $m = poly(n)$, this amounts to $t = \Omega(\log n)$, exponentially far from the ideal bounds (i.e., matching random constructions).
It is an important open problem to improve the dependence on~$m$ in \Cref{eqn:rigidity-bestlb} and to find other candidate sets that may be rigid with better parameters.   

Our connection between rigidity and data structures arises via the inner product problem. The task is to preprocess a vector $v \in \F_2^n$ to compute inner products. The queries are specified by $Q \subseteq \F_2^n$, which is called the {\em query set}. The data structure must compute the inner product of $v$ and any $q \in Q$, that is,
$\inner{q}{v} := \sum_{i=1}^n q[i] \cdot v[i] \mod 2,$ where $q[i]$ denotes the $\nth{i}$ coordinate of $q$. 

Consider the following model for solving this problem, known as a {systematic linear data structure}. During preprocessing, the data structure stores $v$ along with the evaluations of $r$ linear functions $\inner{a_1}{v}, \ldots, \inner{a_r}{v}$, where these inner products are single bits, and $a_1,\ldots, a_r$ denote vectors in $\F_2^n$.
To compute the answer on query $q$, the data structure accesses these $r$ bits in addition to any $t$ entries of $v$. That is, the $r$ linear functions are fixed, and the $t$ bits from $v$ may depend on $q$ and the linear functions. Finally, the query algorithm must output a linear function of these $r$ bits and the $t$ entries of $v$. In this fashion it must be able to correctly compute $\inner{q}{v}$ for all queries $q \in Q$. We note that a result of Jukna and Schnitger~\cite{jukna2011min} shows that the $\{a_1,\ldots,a_r\}$ vectors do not depend on $v$ without loss of generality.
Letting $\T(Q,r)$ denote the minimum value~$t$ of the best data structure for this problem (over worst-case $v$), 
we formalize the model as follows.

\begin{definition*}[Systematic Linear Model]
Let $Q \subseteq \F_2^n$ be a set. Define $\T(Q,r)$ to be the maximum over all $v\in \F_2^n$ of the minimum $t$ sufficient to compute the inner product $\inner{q}{v}$ for every $q \in Q$ when only allowed to output a linear function of $r$ precomputed linear functions of $v$ along with any $t$ bits of $v$.
\end{definition*}

Note that the model does not charge the query time for accessing the $r$ precomputed bits, even if $t \ll r$. 
This coincides with the systematic model studied by Chakraborty, Kamma and Larsen~\cite{ChakrabortyKL18}.

\subsection{Equivalence between rigidity and data structures}

We prove that the rigidity of a set $Q$ corresponds to the time complexity $\T(Q,r)$ in the systematic linear data structure model. 
Some aspects of this result are implicit in prior work
~\cite{jukna2011min, pudlak1997boolean}, but no previous work seems to show this exact correspondence.

\begin{theorem}\label{thm:equivalence}
	A set $Q \subseteq \F_2^n$ is $(r,t)$-rigid if and only if $\T(Q,r) \geq t$.
\end{theorem}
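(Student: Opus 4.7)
The plan is to prove both directions by establishing a tight structural correspondence between (a) the $r$ redundant linear functions stored by the data structure, viewed as a subspace $U = \mathrm{span}(a_1,\ldots,a_r) \subseteq \F_2^n$, and (b) the vectors in $U$ witnessing closeness to each query $q \in Q$ that show failure of $(r,t)$-rigidity.

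For the direction ``not rigid $\Rightarrow$ small $\T(Q,r)$,'' suppose there is a subspace $U$ with $\dim U \leq r$ such that $\dH{q}{U} \leq t-1$ for every $q \in Q$. Pick any basis $a_1,\ldots,a_r$ of $U$ (padded with zero vectors if $\dim U < r$) and precompute the $r$ bits $\inner{a_1}{v},\ldots,\inner{a_r}{v}$. On a query $q$, choose $u \in U$ with $\dH{q}{u} \leq t-1$, write $u = \sum_i \alpha_i a_i$, and read $v[j]$ for $j$ in the support $S_q$ of $q-u$. Using the identity $\inner{q}{v} = \inner{u}{v} + \inner{q-u}{v} = \sum_i \alpha_i \inner{a_i}{v} + \sum_{j \in S_q}(q-u)[j]\cdot v[j]$, the answer is a linear function of the $r$ precomputed bits and at most $t-1$ bits of $v$, so $\T(Q,r) \leq t-1$.

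For the direction ``small $\T(Q,r) \Rightarrow$ not rigid,'' suppose $\T(Q,r) \leq t-1$. Invoking the Jukna--Schnitger result cited in the excerpt, we may assume the $r$ stored linear functions $\inner{a_1}{v},\ldots,\inner{a_r}{v}$ are a fixed non-adaptive choice independent of $v$. For each $q \in Q$, there exist coefficients $\alpha_i^q,\beta_j^q \in \F_2$ and a set $S_q \subseteq [n]$ with $|S_q| \leq t-1$ such that $\inner{q}{v} = \sum_{i=1}^r \alpha_i^q \inner{a_i}{v} + \sum_{j \in S_q}\beta_j^q\, v[j]$ for every $v \in \F_2^n$. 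Evaluating both sides at the standard basis vectors $e_1,\ldots,e_n$ forces the coefficient identity $q = \sum_i \alpha_i^q a_i + \sum_{j \in S_q}\beta_j^q e_j$. Setting $u := \sum_i \alpha_i^q a_i \in U$ gives $\dH{q}{u} \leq |S_q| \leq t-1$, so $U$ witnesses that $Q$ is not $(r,t)$-rigid, completing the contrapositive.

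The main obstacle is the subtlety that, a priori, both the vectors $a_1,\ldots,a_r$ and the choice of which $t$ coordinates of $v$ to probe could depend on $v$, so the raw model does not immediately yield a fixed affine decomposition of each $q$. This is exactly where the Jukna--Schnitger result is essential: it lets us promote the data structure to a non-adaptive form whose behavior is captured by fixed $\F_2$-linear expressions, at which point the coefficient-matching argument over $\F_2$ yields the desired Hamming-close approximation of each query inside $U$.
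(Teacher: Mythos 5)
Your first direction coincides with the paper's proof. The second direction is where you diverge, and as written it has a gap. You assert that for each $q$ there is a single identity $\inner{q}{v} = \sum_{i} \alpha_i^q \inner{a_i}{v} + \sum_{j \in S_q}\beta_j^q\, v[j]$ valid for \emph{every} $v$, and you justify this by appealing to Jukna--Schnitger. But the result cited in the paper only removes the dependence of the vectors $a_1,\ldots,a_r$ on $v$; it does not make the query algorithm non-adaptive. In the model as defined, the set of probed coordinates and the linear combination that is output are allowed to depend on the \emph{values} of the $r$ redundant bits $\inner{a_1}{v},\ldots,\inner{a_r}{v}$ (and, under the usual reading of probe models, on the bits of $v$ read so far). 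The algorithm is therefore a decision tree whose leaves are linear functions, not one global linear functional, and your step of evaluating the identity at $e_1,\ldots,e_n$ is invalid: different basis vectors may land at different leaves, with different $\alpha_i^q,\beta_j^q$ and different probe sets $S_q$. Promoting the model to the fully non-adaptive, single-linear-form version you need is essentially the Jukna--Schnitger \emph{linearization conjecture} mentioned in Section 1.3, not a theorem you can invoke.

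There are two ways to repair this. One is to restrict attention to inputs $v\in U^{\perp}$, where every redundant bit vanishes; this pins down a single leaf of the computation, the identity $\inner{q}{v}=\inner{w}{v}$ with $w=\sum_{j\in S_q}\beta_j^q e_j$ of weight at most $t-1$ then holds for all $v$ in $U^{\perp}$, hence $q-w\in(U^{\perp})^{\perp}=U$ and $\dH{q}{U}\le t-1$. The other is the paper's route, which is robust to full adaptivity: take the rigid witness $q^*$ with $\dH{q^*}{U}\ge t$, fix an input $v$, let $U'$ be the span of the $a_i$ together with the $e_{i_j}$ actually probed on that $v$; if $q^*\notin U'$ there is $y$ orthogonal to $U'$ with $\inner{y}{q^*}=1$, the algorithm cannot distinguish $v$ from $v+y$, and so it errs on one of them. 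Your coefficient-matching idea is the cleaner statement when the model is genuinely a fixed linear map per query, but as written it does not cover the model the theorem is actually about.
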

\begin{proof}
	We first prove that $\T(Q,r) \geq t$ implies that $Q$ is $(r,t)$-rigid. Assume for contradiction that there is an $r$-dimensional subspace $U$ such that $\dH{q}{U} <t$ for all $q\in Q$. 
	Let $v\in \F_2^n$ be the input data. Store $v$ along with the $r$ bits $\inner{b_1}{v},\ldots,\inner{b_r}{v}$, where $b_1,\ldots,b_r$ form a basis for $U$. For every $q\in Q$, there exists $u_q\in U$ such that $q-u_q$ has Hamming weight less than~$t$. Using the $r$ redundant bits, the algorithm on query $q$ can compute $\inner{u_q}{v}$ by writing $u_q$ in terms of the stored basis vectors. Then, it computes $\inner{q-u_q}{v}$ by accessing fewer than $t$ coordinates of $v$. Since $\inner{q}{v} = \inner{u_q}{v} + \inner{q-u_q}{v}$, we have that $T(Q,r) <t $, which is a contradiction.
	
	We now prove that if $Q$ is $(r,t)$-rigid, then $\T(Q,r) \geq t$. Let $e_1,\ldots,e_n$ denote the standard basis, and let $k = T(Q,r)$ be the query time.
	We show that $k \geq t$. Consider a systematic linear data structure whose redundant bits are given by $\inner{a_1}{v},\ldots,\inner{a_r}{v}$. 
	Let $U$ denote the span of $\{a_1,\ldots,a_r\}$. As $Q$ is $(r,t)$-rigid, there exists $q^*\in Q$ with $\dH{q^*}{U} \geq t$. 
	When $q^*$ is the query, assume that the query algorithm accesses the bits $v_{i_1},\ldots,v_{i_{k}}$ for indices $i_1,\ldots, i_k$ to compute $\inner{q^*}{v}$. Now, define $U'$ to be the span of $\{a_1,\ldots,a_r,e_{i_1},\ldots,e_{i_{k}}\}$. Observe that all points in $U'$ are at distance at most $k$ from $U$.  Thus, $\dH{q^*}{U} \leq  \dH{q^*}{U'} + k$.
We will show that $\dH{q^*}{U'}=0$, which implies that $k\geq t$.
We claim that if $\dH{q^*}{U'} \geq 1$, then the query algorithm makes an error. Since $\dH{q^*}{U'} \geq 1$, there exists a vector $y$ with $\inner{y}{q^*} = 1$. Moreover, this vector can be taken to be orthogonal to $U'$ so that $\inner{y}{x} = 0$ for every $x\in U'$. In other words, for every $x\in U'$ we have $\inner{y+v}{x} = \inner{y}{x} + \inner{v}{x} = \inner{v}{x}$. Hence, the query algorithm sees the same values on input data $y+v$ and $v$ because it only accesses the input via vectors in $U'$, and we have $x \in U'$. 
Thus, the algorithm on query $q^*$ must err either on input $y+v$ or $v$ because $\inner{q^*}{y+v} \neq \inner{q^*}{v}$.
\end{proof}

\subsection{Relationship to the cell probe model and other models}
\label{sec:models}

The systematic linear model specializes the {\em systematic model}~\cite{ChakrabortyKL18, gal2007cell}. The latter model still stores the input data $x \in \F_2^n$ verbatim, and it also stores $r < n$ bits that can be precomputed from $x$, where these need not be linear functions of the input data. The query time is $t$ if the query algorithm reads at most $t$ bits from $x$ to compute a query. The output can also be an arbitrary function of these $t$ bits along with the $r$ precomputed bits. The systematic linear model only makes sense for linear queries, whereas the systematic model applies to arbitrary query functions.

Yao's {\em cell probe model} is the most general data structure model~\cite{Yao81}. On input data $x \in \F_2^n$, the data structure stores $s$ cells, containing $w$ bits that are arbitrary functions of $x$. Here, $w$ is the \emph{word size} and $s$ is the \emph{space}. The {\em query time} is $t$ if the algorithm accesses at most $t$ cells to answer any query about $x$ from a set of $m$ possible query functions. There is a rich collection of lower bounds for this model (see e.g.~\cite{Ajtai88,FredmanS89,
	Larsen12a, MiltersenNSW95, PanigrahyTW10, Patrascu11, PatrascuD06}). The best lower bounds known are of the form 
\begin{align}
t \geq \min\left \{
\frac{c\log \frac{m}{n}}{\log \frac{sw}{n}}, \ \frac{cn}{w}\right\},
\label{eqn:ds-bestlb}
\end{align} 
where $m\geq n$ is the number of queries and $c$ is a constant. It is a long-standing problem to prove that $t = \omega(\log m)$ for any explicit problem, even in the linear space regime $s \cdot w = O(n)$.

A special case of the cell probe model is the {\em linear model}~\cite{Agarwal97, DvirGW19}. The latter model stores $s \geq n$ linear functions of $x$ (implicitly $w =1$ is fixed). The query time is $t$ if the query algorithm reads at most~$t$ of these~$s$ bits to compute a query. The output is restricted to be a linear function of these $t$ bits. A distinguishing aspect between linear and systematic linear is that in the latter model, the query algorithm is not charged for accessing the $r$ precomputed bits. In \Cref{sec:ds-rigidity}, we compare the linear and systematic linear models in the context of rigidity and previous work~\cite{DvirGW19}.

\paragraph{Equivalences all the way down.}
We note that the systematic data structure model is identical to the common bits model defined by Valiant~\cite{Valiant92}. 
Corrigan-Gibbs and Kogan~\cite{Corrigan-GibbsK18} 
demonstrate a relationship between the common bits model and a variant of the systematic model defined by Gal and Miltersen~\cite{gal2007cell}. The common bits model is nothing but a certain depth two circuit, and the systematic linear model is simply the common bits model with the restriction that the common bits and output gates are linear functions~\cite{pudlak1997boolean}. Hence, in language of data structures, the linearization conjecture of Jukna and Schnitger posits that the systematic linear model is asymptotically as powerful as the systematic model for answering linear queries~\cite{jukna2011min}.

\subsection{The vector-matrix-vector problem}

We now define the vector-matrix-vector problem, which we call ``the $\umv$ problem'' for short. Let $n$ be a perfect square.
After preprocessing a matrix~$M \in \F_2^{\sqrt{n} \times \sqrt{n}}$, the goal is to output the binary value $\trans{u}Mv$ for vectors $u,v \in \F_2^{\sqrt{n}}$. 
It will be convenient to consider a $\sqrt{n}\times \sqrt{n}$ matrix as an $n$-bit vector  $\ve(M)$ by concatenating consecutive rows. 
More formally, let $x = \ve(M)$, and for $i\in \{1,2,\ldots,n\}$, set $x[i] = M[a,b]$, where $a$ and $b$ satisfy $i= (a-1)\sqrt{n} + b$ and $a, b \in \{1,2,\ldots,\sqrt{n}\}$. Then, $\trans{u}Mv = \inner{\ve(u\trans{v})}{\ve(M)}$.
In this way we consider the $\umv$ problem a special case of the inner product problem. The query set is the collection of rank one binary matrices. Let $\Upsilon \subseteq \F_2^n$ denote the set of vectors obtained from rank one binary matrices via $M \mapsto \ve(M)$, that is,
\begin{align}
\Upsilon :=  \left\{\ve(u\trans{v}) \relmiddle| u,v \in \F_2^{\sqrt{n} \times \sqrt{n}}\right\} \subseteq \F_2^n. \label{def:sigma-rankone}
\end{align}
This set has size $|\Upsilon| = 2^{2\sqrt{n}} - 2^{\sqrt{n}+1} + 1$.

A classic result of Artazarov, Dinic, Kronrod and Faradzev~\cite{ArtazarovDKF70} provides a data structure with space $s=\mathsf{poly}(n)$, word size $w=O(\log n)$, and time $t= O(n/\log n)$. In fact, this algorithm operates in the linear cell probe model. 
It is a central open question to determine whether $t = \Omega(n)$ is necessary in linear space regime, that is, when $s \cdot w = O(n)$. 

The current best cell probe lower bound for the $\umv$ problem is due to  Chattopadhyay, Kouck\'{y}, Loff, and Mukhopadhyay \cite{ChattopadhyayKLM18}. Moreover, their lower bound holds for a randomized model with high error. For constants $c$ and $c'$, they prove that if for every matrix $M$ and every query $u\trans{v}$, the query algorithm correctly computes $\trans{u}Mv$ with probability at least $\frac{1}{2} + \frac{1}{2^{c'\sqrt{n}}}$, then 
\begin{equation}\label{eqn:CKLM-lb}
t \geq \min\left\{\frac{c\sqrt{n}}{\log \frac{sw}{\sqrt{n}}}, \ \frac{cn}{w} \right\}
\end{equation} 

Better lower bounds for the $\umv$ problem are known in the systematic model. Chakraborty, Kamma, and Larsen~\cite{ChakrabortyKL18} prove that $t$ and $r$ must satisfy $t\cdot r = \Omega(n^{3/2}/\log n)$ as long as $r \geq \sqrt{n}$. In the case of $r \leq \sqrt{n}$, they prove that $t = \Omega(n / \log n)$. As the systematic model subsumes the linear version of this model, combining their result with \Cref{thm:equivalence} implies that $\Upsilon$ is $(r,t)$-rigid with 
\begin{align}\label{eqn:CKL-rigidity-lb}
t = \Omega\left(\frac{n^{3/2}}{\max\{\sqrt{n},r\} \cdot \log n}\right).
\end{align}

\subsection{New results on the rigidity of $\Upsilon$ and the cell probe complexity of the $\umv$ problem}


We lower bound the rigidity of $\Upsilon$, defined in \Cref{def:sigma-rankone}. This also implies a lower bound in the systematic linear model. 
The proof is inspired by a result of Alon, Panigrahy, and Yekahnin~\cite{AlonPY09}. 
 
\begin{theorem}\label{thm:rank1-rigidity} 
	Let $n \geq 1024$. The set $\Upsilon \subseteq \F_2^{n}$ of rank one  matrices is $(r,t)$-rigid with $t\geq \frac{n^{3/2}}{128\cdot \max\{\sqrt{n},r\}}$.
\end{theorem}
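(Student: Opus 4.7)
The plan is to prove the contrapositive: if some $r$-dimensional subspace $U \subseteq \F_2^n$ satisfies $\Upsilon \subseteq U + B_{t-1}$ where $B_{t-1}$ is the Hamming ball of radius $t-1$, then $t \geq n^{3/2}/(128 \max\{\sqrt n, r\})$. The main structural ingredient is that for every nonzero $u \in \F_2^{\sqrt n}$, the set
\[
W_u := \{\ve(uv^T) : v \in \F_2^{\sqrt n}\}
\]
is a $\sqrt n$-dimensional linear subspace of $\Upsilon$. Under the row-block decomposition of $\F_2^n \cong \F_2^{\sqrt n \times \sqrt n}$ into $\sqrt n$ blocks of size $\sqrt n$, the $i$-th block of $\ve(uv^T) \in W_u$ equals $v$ when $u_i = 1$ and is zero otherwise, so $W_u$ is a partial repetition code of minimum distance $|\mathrm{supp}(u)|$ in which the message $v$ is replicated across the blocks indexed by $\mathrm{supp}(u)$.

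Following the strategy of Alon, Panigrahy, and Yekhanin~\cite{AlonPY09}, the central technical step is a block-majority decoding argument. For each pair $(u,v)$, pick $M(u,v) \in U$ realizing $\dH{\ve(uv^T)}{M(u,v)} < t$ and let $\tilde v(u,v) \in \F_2^{\sqrt n}$ be the coordinate-wise majority of the $|\mathrm{supp}(u)|$ blocks of $M(u,v)$ indexed by $\mathrm{supp}(u)$. Any column $j$ in which $\tilde v(u,v)_j \neq v_j$ forces disagreement in more than half of those blocks, contributing at least $|\mathrm{supp}(u)|/2$ to the Hamming error; summing over such columns yields $\dH{v}{\tilde v(u,v)} \leq 2t/|\mathrm{supp}(u)|$. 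Since $\tilde v(u,v)$ is determined by $M(u,v) \in U$, for fixed $u$ its image has size at most $2^r$, so $\F_2^{\sqrt n}$ is covered by Hamming balls of radius $2t/|\mathrm{supp}(u)|$ around at most $2^r$ centers.

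For the regime $r \leq \sqrt n$, taking $u = \mathbf 1$ (so $|\mathrm{supp}(u)| = \sqrt n$) and applying the standard entropy estimate $|B_k \cap \F_2^{\sqrt n}| \leq 2^{H(k/\sqrt n)\sqrt n}$ to this covering yields $H(2t/n) \geq 1 - r/\sqrt n$, which forces $t \geq n/128$ whenever $r$ is bounded below $\sqrt n$. In the regime $r > \sqrt n$ a single $W_u$ may fit inside $U$ and the single-$u$ inequality becomes vacuous, so the plan is instead to combine constraints obtained from the full family $\{W_u\}$ by averaging over $u$ (for example, over $u$ of weight near $\sqrt n/2$); the collective covering inequality then produces the product-form bound $t \cdot \max\{\sqrt n, r\} \geq n^{3/2}/128$.

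The main obstacle will be composing the entropy estimates cleanly to preserve the explicit constant $1/128$ across both regimes: tracking how the single-$u$ volume bound degrades as $r$ approaches $\sqrt n$, and showing that the multi-$u$ averaging in the regime $r > \sqrt n$ recovers exactly the missing factor $r/\sqrt n$. Once these constants are settled, the argument is combinatorial throughout and uses no algebraic structure beyond the block repetition built into the subspaces $W_u$.
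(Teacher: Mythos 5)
Your majority-decoding step is sound: for $u$ of support size $w$, the decoded $\tilde v(u,v)$ is a function of $M(u,v)\in U$ alone once $u$ is fixed, the bound $\dH{v}{\tilde v(u,v)}\le 2t/w$ is correct, and so $\F_2^{\sqrt n}$ is indeed covered by $2^r$ Hamming balls of radius $2t/w$. This handles the regime $r\le(1-\delta)\sqrt n$ (with a constant depending on $\delta$), by a route different from the paper's. The genuine gap is the regime $r\ge\sqrt n$, which is the main content of the theorem, and ``averaging over $u$'' cannot close it. Every constraint your decoding produces has the form ``$\F_2^{\sqrt n}$ is covered by $2^r$ balls of radius $2t/w$,'' the strongest being $w=\sqrt n$; each such constraint is individually satisfiable as soon as $r\ge\sqrt n$ (even with radius zero), and no combination of satisfiable counting constraints of the same type produces a contradiction. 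More fundamentally, once $r\ge 2\sqrt n$ the subspace $U$ has more elements than there are rank-one matrices altogether, so no argument that only counts how many queries each element of $U$ can ``serve'' in $\F_2^{\sqrt n}$ can work; one must exploit the geometry of the ambient space $\F_2^{n}$. (Note also that even in the easy regime your single-$u$ bound degrades to nothing as $r\to\sqrt n$, so the uniform constant $1/128$ is out of reach there too without a second argument.)

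The ingredient you are missing is a way to pass from ``some \emph{low-rank} matrix is far from $U$'' to ``some \emph{rank-one} matrix is far from $U$,'' which is how the paper's proof works. Round $r$ up to a multiple $r'$ of $\sqrt n$, partition $[n]$ into $k\approx n/2r'$ blocks of $2r'$ consecutive coordinates, and project $U$ onto each block; every projection has dimension at most $r'$, i.e.\ at most half the block length, so a sphere-covering count inside $\F_2^{2r'}$ (the paper's \Cref{lem:subspace-distance}) yields one $v'\in\F_2^{2r'}$ at distance at least $r'/8$ from \emph{all} projections simultaneously. The concatenation $v$ of $k$ copies of $v'$ is then at distance at least $kr'/8=\Omega(n)$ from $U$, and, because $2r'$ is a multiple of $\sqrt n$ so the rows of $\m(v)$ repeat with period $2r'/\sqrt n$, the matrix $\m(v)$ has rank at most $2r'/\sqrt n$. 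Writing $\m(v)=\sum_i a_i\trans{b_i}$ and using subadditivity of the distance to a subspace (\Cref{lem:subspace-additive}), some $\ve(a_i\trans{b_i})$ is at distance at least $(kr'/8)\big/(2r'/\sqrt n)=\Omega(n^{3/2}/r)$ from $U$. To complete your write-up you need this decomposition step, or a substitute for it; the repetition-code structure of the $W_u$ alone will not reach $r>\sqrt n$.
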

We improve the prior bound in \Cref{eqn:CKL-rigidity-lb} by an $\Omega(\log n)$ factor.
For example, when $r\leq\sqrt{n}$, then $t = \Omega(n)$, and   
when $r = n/2$, then $t = \Omega\left(n^{1/2}\right)$.
\Cref{thm:rank1-rigidity} matches \Cref{eqn:rigidity-bestlb}, the current best bound for explicit rigid sets. We do not know whether there is a subspace $U$ of linear dimension such that all elements of $\Upsilon$ are at distance $o(n)$ from $U$ (unlike for some set rigidity results, where the bounds are tight).
As a corollary of \Cref{thm:equivalence}, we immediately get that $$\T(\Upsilon,r) \geq \frac{n^{3/2}}{128\cdot \max\{\sqrt{n},r\}}.$$ In other words, we prove a lower bound for the $\umv$ problem in the systematic linear model that improves the prior bound by an $\Omega(\log n)$ factor. 
The proof of \Cref{thm:rank1-rigidity} appears in \Cref{sec:rank1-rigidity}.

We also prove a general cell probe lower bound for the uMv problem in the high error regime. Our result improves the previous lower bound in \Cref{eqn:CKLM-lb}. For example, in the linear space regime, when $s\cdot w=O(n)$, we show that $t = \Omega(\sqrt{n})$ while the prior result gives only $t = \Omega(\sqrt{n}/\log n)$.

\begin{theorem}
	\label{thm:uMv-general}
	Let $M \in \F_2^{\sqrt{n} \times \sqrt{n}}$ be a matrix. If a randomized data structure with space $s$, word size $w$, and time $t$ correctly computes queries for the $\umv$ problem
	with probability at least $\frac{1}{2} + \frac{1}{2^{\sqrt{n}/64}}$, then
	\[
	t \geq \min\left\{\frac{c\sqrt{n}}{\log \frac{s\alpha}{n}}, \ \frac{cn}{\alpha} \right\}\]
	where $0 < c \leq 1/36$ is a universal constant and $\alpha := 2(w + \log \frac{sw}{n})$.
\end{theorem}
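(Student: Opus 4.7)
The plan is to prove Theorem 4 via a cell-sampling encoding argument in the style of Panigrahy, Talwar, and Wieder, adapted to the high-error regime as in Chattopadhyay, Kouck\'{y}, Loff, and Mukhopadhyay, but with a tighter accounting that replaces $w$ by $\alpha = 2(w + \log(sw/n))$. First I would fix the hard distribution by taking $M \in \F_2^{\sqrt{n}\times\sqrt{n}}$ uniform and, by averaging over the data structure's internal coins, reduce to a deterministic data structure that still answers a uniformly random query $uv^T$ correctly on a random $M$ with probability at least $1/2 + 2^{-\sqrt{n}/64}$. Since the success probability is near $1/2$, individual query answers will only be useful in aggregate; the gigantic size $|\Upsilon| \approx 2^{2\sqrt{n}}$ of the rank-one query set is what makes aggregation viable.

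Next I would sample a uniformly random subset $C^\star$ of cells of size $\beta s$, for a parameter $\beta$ to be tuned near $n/(s\alpha)$. Call a query \emph{captured} by $C^\star$ if all $t$ of its probes on $(M, q)$ lie inside $C^\star$; a typical query is then captured with probability $\approx \beta^t$, so on average $|\Upsilon|\cdot\beta^t$ queries are captured. On any captured query the output depends only on $(C^\star,\ T(M)|_{C^\star},\ q)$. I would then encode $M$ in two parts: (i) the identity and contents of $C^\star$, which costs $\beta s \log(s/(\beta s)) + \beta s\cdot w$ bits --- the key observation is that an in-sample address takes only $O(\log(sw/n))$ bits, so the per-cell cost collapses to $\alpha$ rather than $w + \log s$; and (ii) a short residual that recovers $M$ from the captured queries' slightly-biased outputs together with the data structure and $C^\star$.

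Handling the high error is the technical crux. Because the per-query advantage is only $\epsilon = 2^{-\sqrt{n}/64}$, an individual captured query carries roughly $\epsilon^2$ bits about $M$, so the total information extractable scales as (captured count)$\cdot\epsilon^2$. The exponent $\sqrt{n}/64$ is calibrated so that the captured subset of size $\approx |\Upsilon|\beta^t = 2^{\Theta(\sqrt{n})}\beta^t$ yields $\Omega(n)$ bits after a Chernoff-style concentration argument combined with a union bound over the $|\Upsilon|$ queries. Imposing that the total encoding length be shorter than $H(M) = n$ bits, i.e.\ $\beta s\cdot\alpha + (\text{residual}) < n$, and optimizing $\beta = \Theta(n/(s\alpha))$ then forces one of $\beta^t \geq 2^{-\Theta(\sqrt{n})}$ or $\beta s\alpha \geq n$ to fail, which after rearrangement produces the two branches $t \geq c\sqrt{n}/\log(s\alpha/n)$ and $t \geq cn/\alpha$ of the theorem.

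The main obstacle will be making the high-error aggregation step quantitatively tight while simultaneously preserving the $\alpha$-versus-$w$ improvement. Concretely, one must argue that the information loss from the small advantage $\epsilon$ and from the union bound over $\Upsilon$ can be absorbed without inflating the encoding, and that the savings from encoding cell addresses relative to $C^\star$ (rather than relative to all of $[s]$) actually survive through the residual. Getting the constants in $c$ and in the $2^{-\sqrt{n}/64}$ advantage to line up is the delicate bookkeeping part; the structural argument is a direct adaptation of existing cell-sampling templates.
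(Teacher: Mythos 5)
Your overall skeleton (Yao reduction, random cell sample $C^\star$, per-cell cost $\alpha$ instead of $w+\log s$) matches the first half of the paper's argument, but the step you yourself flag as the ``technical crux'' --- the ``short residual that recovers $M$ from the captured queries' slightly-biased outputs'' --- is exactly where the proof does not go through as sketched, and it is the reason the paper abandons the encoding route. The heuristic that each captured query carries $\epsilon^2$ bits, so that (captured count)$\cdot\epsilon^2 \gg n$, does not produce an encoding: to actually recover $M$ the decoder must run a maximum-correlation search over all $2^n$ candidate matrices $M'$, and you must bound how many $M'$ have truth tables on the captured query set $Q'$ that correlate with the data structure's output pattern by $\gtrsim \epsilon$. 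The queries $\trans{u}Mv$ are characters of $M$ indexed by rank-one matrices, so pairwise independence gives only a Chebyshev bound: the number of such $M'$ is at most $2^n/(\epsilon^2|Q'|)$, and since $|Q'|\leq|\Upsilon|\approx 2^{2\sqrt{n}}$ and $\epsilon^2=2^{-\sqrt{n}/32}$, this saves only $O(\sqrt{n})$ bits --- hopelessly short of the $\Theta(n)$ needed to beat the cost of writing down $C^\star$. A ``Chernoff-style concentration plus union bound over $\Upsilon$'' cannot fix this, because the relevant union is over the $2^n$ matrices, not the queries, and nothing is independently resampled once $M$ and $C^\star$ are fixed. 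Getting $\Omega(n)$ bits of savings forces you to control the $k$-th moment $\Ex{M}{|\Ex{u,v}{(-1)^{\trans{u}Mv}}|^{k}}$ for $k\approx\sqrt{n}$, which is precisely the rank-based discrepancy computation of the paper's Lemma~\ref{lem:uMv-multiplequery}; at that point you have reconstructed the paper's actual proof, which avoids the residual encoding entirely by sending only $C^\star$ plus a majority bit to Bob and contradicting a one-way communication lower bound (Theorem~\ref{thm:uMv-oneway}) proved by discrepancy on a $\sqrt{n}$-fold XOR of instances.

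Two smaller points. First, your sampling step preserves only the expected \emph{number} of captured queries; in the high-error regime you must preserve the \emph{advantage} restricted to the captured set (the paper's Lemma~\ref{lem:cellsampling} states exactly this, and it is what makes the downstream aggregation work). Second, you need the per-$M$ normalization (an extra stored bit flipping the output when the advantage for that $M$ is negative) before any convexity or aggregation over $M$; without it the advantages of different $M$ can cancel.
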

{
The prior work utilizes a general lifting result for two-way communication complexity from parity decision trees~\cite{ChattopadhyayKLM18}. To obtain the improved bound,  we use a variant of the {cell sampling} technique~\cite{Larsen12,PanigrahyTW10} combined with a reduction to a new lower bound on one-way communication (via discrepancy). The modifications over standard techniques are needed to handle the high error regime for a binary output problem. 
We note that a recent result of Larsen, Weinstein and Yu also uses one-way communication to prove lower bounds for binary output problems for dynamic data structures~\cite{LarsenWY18}. However, their method seems limited to only handling zero error query algorithms. 
The proof of \Cref{thm:uMv-general} appears in \Cref{sec:lb-proofs}. Specifically, see \Cref{lem:cellsampling} in \Cref{sec:lb-proofs} for the variant of cell sampling and see \Cref{thm:uMv-oneway} in \Cref{sec:uMv-oneway} for the discrepancy argument.}

\section{Linear Data Structures and Rigidity}
\label{sec:ds-rigidity}

In this section, we relate linear data structures and rigidity. As linear data structures are a special case of the cell probe model, we may obtain rigidity lower bounds from strong enough static data structure lower bounds (when the queries are linear). We also compare with Dvir, Golovnev, and Weinstein, who exhibit a similar connection~\cite{DvirGW19}. 
We first provide some notation.  
\begin{definition}
Let $Q \subseteq \F_2^n$ be a set. Define $\mathsf{LT}(Q,s)$ to be the maximum over all $v\in \F_2^n$ of the minimum $t$ sufficient to compute the inner product $\inner{q}{v}$ for every $q \in Q$ when the query algorithm's output is a linear function of $t$ bits chosen from the $s$ precomputed linear functions of $v$.
\end{definition}

Table~\ref{table:result-compare} provides a glimpse of our results on linear data structures along with a comparison to \cite{DvirGW19}. 
Recall that a set $Q\subseteq \F_2^n$ is explicit if each coordinate of an arbitrary element of the set can be computed in $\mathsf{poly}(n)$ time.
The prior work shows that sufficiently strong lower bounds against linear data structures will imply \emph{semi-explicit} rigid sets.  A bit more formally, consider a data structure query set $Q\subseteq \F_2^n$ of size $m$ for the inner product problem. They show the following: If $\mathsf{LT(Q,c\cdot n)} \geq t$ for some constant $c$, then there is a $(n'/2,t/\log n)$-rigid set $Q'$ of size at most $m$ contained in $\F_2^{n'}$, where $n' \geq t$. However, the set $Q'$ is only semi-explicit in that it is in $P^{NP}$ -- every element can be computed by a $\mathsf{poly}(m)$ time algorithm with access to an $NP$ oracle. 

We now summarize a few differences between our work and \cite{DvirGW19}. Our result proves that polynomial lower bounds on the query time imply the existence of an explicit rigid set, which is in contrast to semi-explicit sets obtained by \cite{DvirGW19}. On the other hand, explicitness comes with a cost; when $m = \mathsf{poly}(n)$, we need much stronger data structure lower bounds to produce explicit rigid sets. When $m \gg \mathsf{poly(n)}$, the algorithm of \cite{DvirGW19} takes $\mathsf{poly}(m)$ time with access to an NP oracle to compute an element of the semi-explicit rigid set. For problems such as the $\umv$ problem, this is super polynomial time.
The rest of this section concerns proving the following theorem, which implies all of our results in Table~\ref{table:result-compare}. 
\begin{theorem}\label{thm:linearlb-rigidity}
Let $k = \mathsf{LT}(Q,3n/2)$ and let $Q \subseteq \F_2^{n}$ of size $m$ be an explicit query set. There exists a set $Q'\subseteq \F_2^{k}$ with size at most $m \cdot \ceil{\frac{n}{k}}$, whose elements can be computed in $\mathsf{poly}(n)$ time. Moreover, if $k \geq 2\sqrt{n}$, then $Q'$ is explicit and $\left(\frac{k}{2},\frac{k^2}{4n}\right)$-rigid.
\end{theorem}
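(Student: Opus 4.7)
The plan is to define $Q'$ as the collection of all block projections of $Q$ onto a fixed partition of the coordinates into blocks of size $k$, and then to establish rigidity by translating any non-rigidity of $Q'$ into a linear data structure for $Q$ that contradicts the definition of $k$.

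Concretely, I would partition $[n]$ into $\ceil{n/k}$ consecutive blocks $B_1,\ldots,B_{\ceil{n/k}}$, each of size exactly $k$ (padding the last block with dummy coordinates fixed to zero if $k\nmid n$). For each $q\in Q$ and block index $j$, let $q^{(j)}\in\F_2^k$ be the restriction of $q$ to $B_j$, and define
\[ Q' := \{q^{(j)} : q\in Q,\ j\in[\ceil{n/k}]\}. \]
Then $|Q'|\le m\cdot\ceil{n/k}$, and each coordinate of any $q^{(j)}$ is simply a coordinate of an explicitly computable $q\in Q$, so $Q'$ is computable in $\mathsf{poly}(n)$ time. When $k\geq 2\sqrt n$ we have $n\le k^2/4$, hence $\mathsf{poly}(n)=\mathsf{poly}(k)$, so $Q'$ is explicit in its own ambient dimension.

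For rigidity I would argue by contradiction. Suppose $Q'$ is not $(k/2,\,k^2/(4n))$-rigid, so some subspace $U\subseteq\F_2^k$ with $\dim U\le k/2$ satisfies $\dH{q'}{U}<k^2/(4n)$ for every $q'\in Q'$. For each $q\in Q$ and block $j$, take the closest $u_{q,j}\in U$ to $q^{(j)}$, so $q^{(j)}=u_{q,j}+\epsilon_{q,j}$ with $|\epsilon_{q,j}|<k^2/(4n)$. Zero-padding outside $B_j$ gives $q=\tilde u_q+\tilde\epsilon_q$ in $\F_2^n$, where $\tilde u_q$ lies in the block-lifted subspace $\tilde U:=\bigoplus_j \tilde U_j$ (dimension at most $n/2+k/2$, exactly $n/2$ when $k\mid n$) and $|\tilde\epsilon_q|=\sum_j|\epsilon_{q,j}|\le\ceil{n/k}\cdot k^2/(4n)\le k/4+k^2/(4n)$. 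The natural linear data structure stores a basis of $\tilde U$ together with the standard basis $e_1,\ldots,e_n$, fitting within $3n/2$ cells after boundary accounting; on query $q$ it XORs the basis cells representing $\tilde u_q$ with the standard-basis cells representing $\tilde\epsilon_q$. If this XOR has length strictly below $k$, it contradicts $k=\mathsf{LT}(Q,3n/2)$, forcing the rigidity.

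The main obstacle is driving the query time strictly below $k$. The error piece is already controlled at $|\tilde\epsilon_q|\le k/4+o(k)$, but the naive expansion of $\tilde u_q$ in the lifted basis $\{\tilde b_{i,j}\}$ costs $\sum_j|\mathrm{supp}(u_{q,j})|\le(k/2)\ceil{n/k}\approx n/2$, which swamps $k$ whenever $k\ll n$ and only yields a contradiction in the easy regime $k>2n/3$. To reach the full range $k\geq 2\sqrt n$, I would choose cells adapted to the block structure of $\tilde U$: store the row-sums $c_i=\sum_j\tilde b_{i,j}$ for $i=1,\ldots,k/2$ alongside a carefully pruned subset of the block-localized cells $\tilde b_{i,j}$, and for each $q$ represent $\tilde u_q$ per row $i$ using either the $|S_{q,i}|$ block-local cells indexed by $S_{q,i}=\{j:c_{q,j,i}=1\}$ or the complementary expression $c_i+\sum_{j\notin S_{q,i}}\tilde b_{i,j}$, whichever is shorter. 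The hope is that this swap---possibly combined with coding-theoretic sparsification of the generating set for the collection $\{u_{q,j}\}$---drives $\sum_i\min(|S_{q,i}|,\,1+\ceil{n/k}-|S_{q,i}|)$ below $3k/4$, closing the gap to the desired query-time bound while still respecting the $3n/2$ storage budget. Justifying that such a scheme exists and applies uniformly to every $q\in Q$ is the technical crux of the argument.
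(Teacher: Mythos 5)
Your construction of $Q'$ --- the block projections of $Q$ onto $\ceil{n/k}$ consecutive blocks of width $k$, with zero-padding --- is exactly the paper's, as are the size and explicitness claims. The gap is in the rigidity argument, and it is the one you flag yourself: it is not a technicality that a cleverer encoding will fix. Lifting $U$ to the direct sum $\bigoplus_j \tilde U_j$ gives a subspace of dimension $\approx n/2$, and in the linear model the query algorithm pays one probe for every cell it reads to reconstruct the low-rank component $\tilde u_q$; since these components genuinely span $\Theta(n)$ dimensions as $q$ ranges over $Q$, no choice of $3n/2$ stored linear functions can make their worst-case cost $o(n)$. Your row-sum/complementation trick saves at most a factor of about two: each term $\min\left(|S_{q,i}|,\,1+\ceil{n/k}-|S_{q,i}|\right)$ can still be $\approx n/(2k)$, and there are $k/2$ rows, so the total is still $\approx n/4$, which beats $k$ only when $k=\Omega(n)$ --- nowhere near the target $k\ge 2\sqrt n$. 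Any route that reduces non-rigidity of $Q'$ directly to a cheap \emph{linear} data structure for $Q$ is blocked in this way.

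The paper avoids this with two ideas your writeup lacks. First, it does not build a linear data structure at all: \Cref{prop:linear-succinct} shows that $\T(Q,k/2)<k/2$ would give $\mathsf{LT}(Q,n+k/2)<k$, so $\mathsf{LT}(Q,3n/2)=k$ (with $k\le n$) forces $\T(Q,k/2)\ge k/2$ in the \emph{systematic} linear model, and \Cref{thm:equivalence} converts this into ``$Q$ itself is $(k/2,k/2)$-rigid.'' The systematic model is the right intermediate precisely because the $r$ redundant bits are free, so the low-rank part of a decomposition contributes nothing to query time --- the cost you could not control never appears. Second, rigidity is pushed from $Q$ down to $Q'$ by the purely combinatorial \Cref{lem:submatrix-rigidity}: assuming some $U$ of dimension $k/2$ lies within $k^2/(4n)$ of every block projection, one lifts $U$ not to $\bigoplus_j\tilde U_j$ but to the \emph{diagonal} $\{(v,v,\ldots,v):v\in U\}$ restricted to the first $n$ coordinates, which has dimension only $k/2$, and sums the per-block distances over the $\ceil{n/k}$ blocks to get a point of $Q$ within distance $<k/2$ of a $(k/2)$-dimensional subspace, contradicting the rigidity of $Q$. (The one delicate point there is that the per-block distances must be realized against a single $v\in U$ common to all blocks, which is where the aggregation has to be done carefully.) Keeping the lifted dimension at $k/2$ rather than $n/2$, and charging for it in a model where that dimension is free, is the content you are missing; with those two steps your block decomposition completes the proof.
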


Note that for every $s\geq 3n/2$, we have that $\mathsf{LT}(Q,3n/2) \geq \mathsf{LT}(Q,s)$. Hence, a sufficiently strong lower bound on $\mathsf{LT}(Q,s)$ for any $s\geq 3n/2$ will imply a rigidity lower bound.
The following corollary shows the consequence of Theorem~\ref{thm:linearlb-rigidity} for specific values of $k$. 

\begin{corollary}\label{cor:linearlb-rigidity}
Let $k = \mathsf{LT}(Q,3n/2)$ and let $Q \subseteq \F_2^{n}$ of size $m$ be an explicit query set. There exists a set $Q'\subseteq \F_2^{k}$ with size at most $m \cdot \ceil{\frac{n}{k}}$, whose elements can be computed in $\mathsf{poly}(n)$ time. Moreover,
\begin{enumerate}
\item[(a)] If $k = \omega\left(\sqrt{n\log m}\right)$, then $Q'$ is explicit and $\left(k/2,\omega(\log m)\right)$-rigid.
\item[(b)] If $k = \Omega\left(n^{(1+\delta)/2}\right)$ for some $\delta >0$, then $Q'$ is explicit and $\left(k/2,\Omega\left(n^{\delta}\right)\right)$-rigid.
\end{enumerate}
\end{corollary}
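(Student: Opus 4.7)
The plan is to invoke \Cref{thm:linearlb-rigidity} directly and translate its conclusion under the two asymptotic hypotheses on $k$. The theorem already hands us a set $Q' \subseteq \F_2^k$ of size at most $m \cdot \lceil n/k \rceil$ whose elements are computable in $\mathsf{poly}(n)$ time, so the first sentence of the corollary is just a restatement. All that is left for parts (a) and (b) is to (i) verify that the precondition $k \geq 2\sqrt{n}$ of \Cref{thm:linearlb-rigidity} holds under each growth assumption, (ii) check that ``$\mathsf{poly}(n)$ time'' upgrades to ``$\mathsf{poly}(k)$ time'' so that $Q'$ is explicit as a subset of $\F_2^k$, and (iii) evaluate the guaranteed rigidity parameter $\tfrac{k^2}{4n}$ asymptotically.

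For part (a), the bound $k = \omega(\sqrt{n \log m})$ combined with $m \geq n$ gives $k = \omega(\sqrt{n \log n})$, which certainly exceeds $2\sqrt{n}$ for all sufficiently large $n$, so \Cref{thm:linearlb-rigidity} applies. Plugging into the rigidity parameter, $\tfrac{k^2}{4n} = \omega(n \log m)/(4n) = \omega(\log m)$, giving the claimed $(k/2,\omega(\log m))$-rigidity. For part (b), $k = \Omega(n^{(1+\delta)/2})$ with $\delta > 0$ likewise dominates $2\sqrt{n}$, so the theorem applies and yields $\tfrac{k^2}{4n} = \Omega(n^{1+\delta})/(4n) = \Omega(n^\delta)$.

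Explicitness in both cases follows because $k \geq 2\sqrt{n}$ implies $n \leq k^2/4$, so $\mathsf{poly}(n) \subseteq \mathsf{poly}(k)$: elements of $Q'$ are computable in time polynomial in the ambient dimension $k$, which is the definition of explicit used in the paper. There is essentially no obstacle to overcome; the only subtlety worth double-checking is that the growth assumptions on $k$ indeed force $k \geq 2\sqrt{n}$ so that the conditional conclusion of \Cref{thm:linearlb-rigidity} is unlocked, and this is immediate from $\log m \geq \log n \to \infty$ in case (a) and from $\delta > 0$ in case (b).
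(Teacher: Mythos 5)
Your proposal is correct and is exactly the intended derivation: the paper presents the corollary as an immediate consequence of \Cref{thm:linearlb-rigidity}, obtained by checking $k \geq 2\sqrt{n}$ under each hypothesis and evaluating the parameter $\frac{k^2}{4n}$, which is precisely what you do. The explicitness point ($n \leq k^2/4$ so $\mathsf{poly}(n) \subseteq \mathsf{poly}(k)$) matches the paper's remark that $k/2 \geq \sqrt{n}$ suffices.
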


\Cref{cor:linearlb-rigidity}(a) explains the first and last rows in \Cref{table:result-compare}, and \Cref{cor:linearlb-rigidity}(b) explains the middle row. 
Using \Cref{cor:linearlb-rigidity}(a) applied to $\Upsilon$ with $m = 2^{2\sqrt{n}}-2^{\sqrt{n}+1}+1$, we obtain that a lower bound of $\mathsf{LT}(\Upsilon,3n/2) \geq \omega(n^{3/4})$ would imply the existence of an explicit set $Q' \subseteq \F_2^k$ of size $2^{O(\sqrt{n})}$ that is $(k/2,\omega(\sqrt{n}))$-rigid.
We note that it is an open question to prove $\mathsf{LT}(\Upsilon,3n/2) \geq \omega(\sqrt{n})$.

\begin{table}[t]
	\ra{1.5}
	\adjustbox{max width = \textwidth}{
		\centering
		\begin{tabular}{@{}lllll@{}}\toprule
			$m$ vs $n$&{$k=\mathsf{LT}(Q,3n/2)$}& Rigidity Bounds&Explicitness&Reference\\
			\midrule
			\multirow{2}{*}{$m = n^c$} &  $k = \omega\left(\sqrt{n \log n}\right)$  &$\left(k/2,\omega\left(\log n\right)\right)$-rigid& $\mathsf{poly}(n)$ time & This work\\ 
			& $k = \omega\left(\log^2 n\right)$   &$\left(k/2,\omega\left(\log n\right)\right)$-rigid&  $\mathsf{poly}(n)$ time + $NP$ oracle calls & \cite{DvirGW19} \\ 
			\cmidrule{1-5}
			\multirow{2}{*}{$m = n^c$} &  $k = \Omega\left(n^{(1+\delta)/2}\right)$  &$\left(k/2,\Omega\left(n^{\delta}\right)\right)$-rigid& $\mathsf{poly}(n)$ time & This work\\ 
			& $k = \Omega\left(n^{\delta}\log n\right)$  &$\left(k/2,\Omega\left(n^{\delta}\right)\right)$-rigid&  $\mathsf{poly}(n)$ time + $NP$ oracle calls & \cite{DvirGW19}\\ 
			\cmidrule{1-5}
			\multirow{2}{*}{$m = 2^{c\sqrt{n}}$} &   $k = \omega\left(n^{3/4}\right)$ &$\left(k/2,\omega\left(\sqrt{n}\right)\right)$-rigid& $\mathsf{poly}(n)$ time & This work\\ 
			& $k = \omega\left(\sqrt{n}\cdot \log n\right)$  &$\left(k/2,\omega\left(\sqrt{n}\right)\right)$-rigid&  $\mathsf{poly}\left(2^{\sqrt{n}}\right)$ time + $NP$ oracle calls & \cite{DvirGW19} \\ 
			\bottomrule
	\end{tabular}}
	\caption{Comparison with \cite[Theorem 7.1]{DvirGW19}: Let $Q \subseteq \F_2^{n}$ of size $m$ be a query set, $c \geq 1$ and $\delta > 0$ be constants, and let {$k=\mathsf{LT}(Q,3n/2)$}. The second column states the lower bound on {$\mathsf{LT}(Q,3n/2)$} that implies existence of rigid sets whose parameters are given in the third column. All rigid sets have size at most $\mathsf{poly}(m)$ and are contained in $\F_2^{k}$.}
	\label{table:result-compare}
\end{table}

\subsection{Proof of \Cref{thm:linearlb-rigidity}}
We already know the equivalence between systematic linear data structures and rigidity (from \Cref{thm:equivalence}). Therefore, it is sufficient to design a linear data structure from a systematic linear data structure to relate the former with rigidity. 
\begin{proposition}\label{prop:linear-succinct}
	Let $Q \subseteq \F_2^{n}$ be a query set. If $\ \T(Q,r) \leq t$, then $\mathsf{LT}(Q,n+r) \leq t+r$.
\end{proposition}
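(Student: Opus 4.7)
The proof is a direct simulation: we convert the systematic linear data structure into a linear one by absorbing the verbatim storage of $v$ into the set of stored linear functions.

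Concretely, suppose a systematic linear data structure achieves $\T(Q,r) \leq t$. It stores $v \in \F_2^n$ verbatim along with $r$ precomputed linear functions $\inner{a_1}{v}, \ldots, \inner{a_r}{v}$. To build a linear data structure, we define the $s = n+r$ stored linear functions to be the $n$ coordinate projections $\inner{e_1}{v}, \ldots, \inner{e_n}{v}$ together with the $r$ functions $\inner{a_1}{v}, \ldots, \inner{a_r}{v}$. This satisfies the linear model's requirement that every stored cell is a linear function of the input.

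For any query $q \in Q$, the systematic linear algorithm accesses at most $t$ coordinates $v[i_1], \ldots, v[i_t]$ of $v$ and outputs some linear combination of these coordinates together with the $r$ precomputed bits. The linear data structure simulates this by probing exactly the corresponding $t$ coordinate functions $\inner{e_{i_1}}{v}, \ldots, \inner{e_{i_t}}{v}$ plus all $r$ precomputed functions, then outputting the same linear combination. This uses $t + r$ of the $n + r$ stored functions, and the output is a linear function of them, so $\mathsf{LT}(Q, n+r) \leq t + r$.

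There is no real obstacle here; the entire content of the proposition is the bookkeeping difference between the two models. The systematic linear model treats free access to the $r$ precomputed bits as part of its definition, while the linear model charges for every probed bit, which is why we must add $r$ to the query time and $n$ to the storage count when translating between them.
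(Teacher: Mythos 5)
Your proof is correct and is essentially identical to the paper's: both store the $n$ coordinate projections together with the $r$ redundant linear functions, simulate the systematic query algorithm, and charge $t+r$ probes because the linear model does not give free access to the precomputed bits. No issues.
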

\begin{proof}
Let $v \in \F_2^{n}$ be the input data, and let $\inner{a_1}{v},\ldots,\inner{a_r}{v}$ be the $r$ redundant bits stored by the systematic linear data structure. We now describe a linear data structure for $Q$ with space $n+r$ and query time $t+r$. The data structure stores $\inner{a_1}{v},\ldots,\inner{a_r}{v},\inner{e_1}{v},\ldots,\inner{e_n}{v}$, where $e_1,\ldots,e_n$ are the standard basis vectors. The query algorithm on $q\in Q$ first accesses $\inner{a_1}{v},\ldots,\inner{a_r}{v}$ and then simulates the query algorithm of the systematic linear data structure on $q$. Since the systematic linear data structure accesses at most $t$ bits from $\inner{e_1}{v},\ldots,\inner{e_n}{v}$, we can conclude that the query time is at most $t+r$.
\end{proof}

We prove that if a set contained in a $n$-dimensional space is $(r,t)$-rigid, then there is another $(r, tr/n)$-rigid set which is contained in a $2r$-dimensional space. 
\begin{lemma}\label{lem:submatrix-rigidity}
Let $r,n$ be positive integers. If $S \subseteq \F_2^{n}$ is $(r,t)$-rigid of size $m$, then there is a set $S' \subseteq \F_2^{2r}$ of size at most $m \cdot \ceil{\frac{n}{2r}}$ that is $(r,tr/n)$-rigid. Moreover, if $S$ is explicit, then each element of $S'$ can be computed in $\mathsf{poly}(n)$ time.
\end{lemma}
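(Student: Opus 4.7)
My plan is to use a chunking construction. Partition the coordinates $[n]$ into $k := \lceil n/(2r) \rceil$ blocks $I_1,\dots,I_k$ of size at most $2r$ (padding the last block with zero coordinates if necessary) and define
\[ S' := \{\, s|_{I_j} : s\in S,\ j\in[k]\,\}\ \subseteq\ \F_2^{2r}, \]
the set of all block-projections. Then $|S'|\le m\lceil n/(2r)\rceil$ as required, and if $S$ is explicit, each element of $S'$ is computed from a single $s\in S$ by a projection to one block, which takes $\mathrm{poly}(n)$ time.

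For the rigidity claim I would argue by contrapositive. Suppose some subspace $V\subseteq \F_2^{2r}$ of dimension at most $r$ satisfies $d_H(s|_{I_j}, V)<tr/n$ for every $s\in S$ and every $j\in[k]$. The goal is to build a subspace $U\subseteq \F_2^n$ of dimension at most $r$ with $d_H(s,U)<t$ for every $s\in S$, contradicting the $(r,t)$-rigidity of $S$. For each block $j$, let $\iota_j\colon \F_2^{2r}\hookrightarrow \F_2^n$ embed a vector into block $I_j$ with zeros elsewhere, so that $\iota_j(V)$ is a dimension $\leq r$ subspace of $\F_2^n$. Applying the $(r,t)$-rigidity hypothesis to each $\iota_j(V)$ produces an $s^{(j)}\in S$ with
\[ d_H\bigl(s^{(j)}|_{I_j},V\bigr) \;+\; \sum_{i\ne j} \bigl|s^{(j)}|_{I_i}\bigr| \;=\; d_H\bigl(s^{(j)},\iota_j(V)\bigr) \;\ge\; t. \]
If the first summand is at least $tr/n$ for some $j$, then $s^{(j)}|_{I_j}\in S'$ is the desired far chunk and we are done. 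Otherwise, for every $j$ the off-block weight satisfies $\sum_{i\ne j}|s^{(j)}|_{I_i}|\ge t-tr/n\ge t/2$ (assuming $r\le n/2$), giving strong joint constraints across $S$.

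The main obstacle, and what I expect to be the technical heart of the proof, is converting the off-block weight information into an outright contradiction. The naive route of stacking chunk-wise nearest neighbors yields only a product subspace $V^k\subseteq \F_2^n$ of dimension $rk$ within Hamming distance at most $t$ of $S$, which is far too large to contradict $(r,t)$-rigidity; the diagonal embedding $U=\{(v,v,\ldots,v):v\in V\}$ has the right dimension $r$ but demands a single $v\in V$ simultaneously close to all chunks of each $s$, a property which need not follow from the chunk-wise closeness assumption. The argument must therefore exploit the precise parameter balance between the block count $k$, the target rigidity $tr/n$, and the structure of $V$, combining the per-$j$ off-block weight bounds via a pigeonhole over blocks and a careful weight-versus-distance accounting. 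Correctly handling the case where $V$ itself contains high-weight vectors (so that heavy chunks need not be far from $V$) is the most delicate aspect, and I expect it to be the step that determines the correct choice of partition (possibly a non-consecutive one that ``spreads'' correlated coordinates) and unlocks the claimed $(r,tr/n)$-rigidity.
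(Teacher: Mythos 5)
Your construction of $S'$ (consecutive width-$2r$ blocks, padded last block), the size bound $m\cdot\lceil n/(2r)\rceil$, and the explicitness argument all coincide with the paper's. The problem is that the rigidity claim --- which is the entire content of the lemma --- is never established. You set up the right contrapositive, correctly note that the product subspace $V\times\cdots\times V$ has the right distance behavior but dimension $rk$ instead of $r$, and that the diagonal $\{(v,\dots,v):v\in V\}$ has dimension $r$ but would require a single $v$ simultaneously close to every block of $s$, and then you stop, deferring the ``technical heart'' to an unspecified pigeonhole and weight-accounting argument. No such argument is supplied, so the proposal does not prove that $S'$ is $(r,tr/n)$-rigid; as it stands this is a genuine gap, not a completed proof by a different route.

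For comparison, the paper's proof takes exactly the diagonal route you considered and rejected: it lets $V'$ be the projection of $\{(v,\dots,v):v\in V\}$ to the first $n$ coordinates (dimension at most $r$) and asserts in one line that $d_H(s,V')<\frac{tr}{n}\cdot\lceil\frac{n}{2r}\rceil<t$ for every $s\in S$. That assertion amounts to $d_H(s,V')\le\sum_i d_H(s|_{I_i},V)$, which is precisely the inequality you object to: the hypothesis supplies a possibly different nearest point $v_i\in V$ in each block, whereas the diagonal demands one common $v$, and $\min_{v\in V}\sum_i d_H(s|_{I_i},v)$ can greatly exceed $\sum_i d_H(s|_{I_i},V)$ (take $s|_{I_i}=v_i$ exactly, with the $v_i$ spread out inside $V$; the right side is $0$ while the left side is $\Omega(kr)$). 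So you have correctly located the crux of the lemma, but you have neither resolved it yourself nor recovered the paper's step, which is asserted without the justification your objection shows is needed. To close the gap you would have to either justify that per-block additivity for the diagonal in this setting or exhibit a genuinely different dimension-$r$ subspace of $\F_2^{n}$ that is close to all of $S$.
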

\begin{proof}
Let $k = \floor{\frac{n}{2r}}$ and define $S_1,\ldots,S_{k} \subseteq \F_2^{2r}$ by
	\[S_i = \left\{\left(s[2r \cdot (i-1)+1],\ldots,s[2r\cdot i]\right) \mid s\in S\right\}\] for each $i\in \{1,2,\ldots,k\}$. 
	Additionally, if $n/2r$ is not an integer, then define 
	\[S_{k+1} = \left\{\left(s[2r \cdot k+1], \ldots, s[n], 0,\ldots,0\right) \mid s\in S\right\} \subseteq \F_2^{2r};\] 
	otherwise set $S_{k+1} = \emptyset$. Define $S' = \bigcup_{i=1}^{k+1} S_i$.
	We claim that $S'$ is $(r,tr/n)$-rigid. Indeed, for the sake of contradiction assume that there is a subspace $V$ in $\F_2^{2r}$ of dimension $r$ such that all points in $S'$ are at a distance less than $tr/n$ from $V$. 
	Consider the subspace $\{(v,v,\dots,v) \mid v\in V\} \subseteq \F_2^{2r \cdot (k+1)}$ and project it to the first $n$ coordinates. Call this subspace $V'$, which has dimension $r$. Now, the distance of each point in $S$ from $V'$ is less than $\frac{tr}{n} \cdot \ceil{\frac{n}{2r}} < t$, which is a contradiction. 
	
	Regarding the explicitness of $S'$, it is clear that all coordinates of an element of $S'$ correspond to some coordinate of a specific element of $S$. Since $S$ is explicit, we can infer that each element of $S'$ can be computed in $\mathsf{poly}(n)$.
\end{proof}

\begin{proof}[Proof of \Cref{thm:linearlb-rigidity}]
	Since $\mathsf{LT}(Q,3n/2) = k$ and $k\leq n$, \Cref{prop:linear-succinct} implies that $\T(Q,k/2) \geq k/2$. Therefore by \Cref{thm:equivalence}, we can conclude that $Q$ is $(k/2,k/2)$-rigid. \Cref{lem:submatrix-rigidity} implies that there exists a set $Q'$ that is $\left(\frac{k}{2},\frac{k^2}{4n}\right)$-rigid and the size of $Q'$ is at most $m\cdot \ceil{\frac{n}{k}}$.
	Moreover, every element of $Q'$ can be computed in time $\mathsf{poly}(n)$. Since $k/2 \geq \sqrt{n}$, we can conclude that $Q'$ is explicit.
\end{proof}

\section{Rigidity Lower Bounds for the Set of Rank One Matrices}
\label{sec:rank1-rigidity}
Before proving \Cref{thm:rank1-rigidity}, we present preliminaries.
Recall two standard binomial estimates:
\begin{proposition}
	\label{prop:binom}
	For integers $0\leq k \leq \ell$, 
	\begin{enumerate}
		\item  $\log \binom{\ell}{k} \leq k \cdot \log \frac{e\ell}{k}$.
		\item  if $k \leq \ell/16$, then $\sum_{i=0}^k \binom{\ell}{i} \leq 2^{\ell/4}$.
	\end{enumerate}
\end{proposition}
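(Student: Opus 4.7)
The plan is to prove the two bounds separately, with part (2) building on part (1).

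For part (1), I would use two elementary inequalities. First, $\binom{\ell}{k} \leq \ell^k/k!$, which is immediate from the definition since each of the $k$ falling-factorial factors $\ell, \ell-1, \ldots, \ell-k+1$ is at most $\ell$. Second, $k! \geq (k/e)^k$, which follows by comparing $e^k = \sum_{j \geq 0} k^j/j!$ with its $j=k$ term $k^k/k!$. Combining the two gives $\binom{\ell}{k} \leq (e\ell/k)^k$, and taking $\log_2$ of both sides yields the stated inequality.

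For part (2), I would first reduce the whole sum to a constant multiple of the largest term, and then invoke part (1). Because $k \leq \ell/16$, the ratio of consecutive binomials satisfies
\[
\frac{\binom{\ell}{i}}{\binom{\ell}{i-1}} = \frac{\ell-i+1}{i} \geq 15 \qquad \text{for all } 1 \leq i \leq k,
\]
so a geometric-series estimate gives
\[
\sum_{i=0}^{k} \binom{\ell}{i} \leq \binom{\ell}{k} \sum_{j \geq 0} 15^{-j} \leq \tfrac{15}{14}\,\binom{\ell}{k}.
\]
Then part (1) provides $\binom{\ell}{k} \leq (e\ell/k)^k$, and since $x \mapsto x\log(e\ell/x)$ is increasing on $(0, \ell)$, the worst case in the hypothesized range is $k = \ell/16$, reducing the claim to a fixed numerical check on the exponent.

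The main delicate point is the numerical calibration at the end of part (2): the crude bound from part (1) evaluated at $k = \ell/16$ produces an exponent involving $\log(16e)$, and one has to confirm (together with the $\tfrac{15}{14}$ prefactor) that the total is at most $\ell/4$. I would handle this either by using the sharper Stirling estimate $k! \geq \sqrt{2\pi k}\,(k/e)^k$ to absorb the lower-order slack, or by switching to the equivalent entropy-based form $\binom{\ell}{k} \leq 2^{\ell H(k/\ell)}$ and plugging in the endpoint $k/\ell = 1/16$. Either way, once the structural reduction to a geometric series and a single binomial coefficient is in place, the remaining estimate is purely arithmetic.
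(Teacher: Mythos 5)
Your part (1) is fine, and the structural reduction in part (2) is also fine: for $i\le k\le \ell/16$ the ratio $\binom{\ell}{i}/\binom{\ell}{i-1}=(\ell-i+1)/i\ge 15$, so $\sum_{i=0}^{k}\binom{\ell}{i}\le \tfrac{15}{14}\binom{\ell}{k}$, and $x\mapsto x\log\frac{e\ell}{x}$ is indeed increasing on $(0,\ell)$, so the endpoint $k=\ell/16$ is the worst case. The genuine gap is the step you defer as "purely arithmetic": it does not go through, and neither of your proposed repairs can make it go through. At $k=\ell/16$, part (1) gives the exponent $\tfrac{\ell}{16}\log_2(16e)\approx 0.34\,\ell$; the sharper Stirling bound $k!\ge\sqrt{2\pi k}\,(k/e)^k$ only removes an additive $O(\log \ell)$ from that exponent, which cannot bridge a gap that is linear in $\ell$; and the entropy form gives $2^{\ell H(1/16)}$ with $H(1/16)=\tfrac14+\tfrac{15}{16}\log_2\tfrac{16}{15}\approx 0.337>\tfrac14$. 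All three routes land strictly above $\ell/4$, and this is not slack in your estimates.

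In fact, part (2) is false as stated at the allowed boundary, so no argument can close the gap: for $\ell=16$, $k=1\le \ell/16$ one has $\binom{16}{0}+\binom{16}{1}=17>2^{4}$, and for large $\ell$ with $k=\lfloor \ell/16\rfloor$ one has $\binom{\ell}{k}\ge 2^{\ell H(1/16)}/(\ell+1)=2^{(0.337-o(1))\ell}\gg 2^{\ell/4}$. There is no proof in the paper to compare against (the proposition is merely "recalled" as standard), but the standard fact needs either a smaller constant in the hypothesis, e.g.\ $k\le \ell/32$ (since $H(1/32)\approx 0.20<\tfrac14$, using $\sum_{i=0}^{k}\binom{\ell}{i}\le 2^{\ell H(k/\ell)}$ for $k\le \ell/2$), or a larger exponent in the conclusion, e.g.\ $2^{0.35\ell}$; either repair must then be propagated to where the estimate is invoked, namely the counting argument in \Cref{lem:subspace-distance} (for instance by taking distance $\ell/32$ there and adjusting the constants). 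So your write-up isolates exactly the right delicate point, but it is not merely delicate: it is where the claimed inequality actually fails.
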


We will need a useful property about the distance of a point from a subspace. 
\begin{lemma}
\label{lem:subspace-additive}
Let $V \subseteq \F_2^{\ell}$ be a subspace. For $u_1,u_2 \in \F_2^{\ell}$, 
$\dH{u_1+u_2}{V} \leq \dH{u_1}{V}+\dH{u_2}{V}.$
\end{lemma}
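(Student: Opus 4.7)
The plan is to exploit the two key facts that (i) $V$ is closed under addition and (ii) the Hamming weight $\mathrm{wt}(\cdot)$ is subadditive on $\F_2^{\ell}$ because the support of a sum is contained in the union of the supports of the summands.

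First I would fix optimal witnesses. Let $v_1^{*}, v_2^{*} \in V$ be vectors attaining the minima in the definitions of $\dH{u_1}{V}$ and $\dH{u_2}{V}$, so that $\dH{u_1}{v_1^{*}} = \dH{u_1}{V}$ and $\dH{u_2}{v_2^{*}} = \dH{u_2}{V}$.

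Next I would use the subspace property: since $V$ is a subspace, $v_1^{*} + v_2^{*} \in V$, and therefore, by definition of $\dH{\cdot}{V}$ as a minimum over $V$,
\[
\dH{u_1+u_2}{V} \;\leq\; \dH{u_1+u_2}{v_1^{*}+v_2^{*}} \;=\; \mathrm{wt}\bigl((u_1+u_2)-(v_1^{*}+v_2^{*})\bigr) \;=\; \mathrm{wt}\bigl((u_1-v_1^{*})+(u_2-v_2^{*})\bigr).
\]
Finally I would apply subadditivity of Hamming weight, which follows from the inclusion $\mathrm{supp}(a+b) \subseteq \mathrm{supp}(a) \cup \mathrm{supp}(b)$ in $\F_2^{\ell}$, to conclude
\[
\mathrm{wt}\bigl((u_1-v_1^{*})+(u_2-v_2^{*})\bigr) \;\leq\; \mathrm{wt}(u_1-v_1^{*}) + \mathrm{wt}(u_2-v_2^{*}) \;=\; \dH{u_1}{V} + \dH{u_2}{V}.
\]
This is essentially just the triangle inequality for the Hamming metric combined with closure of $V$; there is no real obstacle. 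The only minor point worth noting is that one must take the witnesses from $V$ (not an arbitrary pair of vectors) so that their sum lies in $V$ and hence can serve as a candidate in the minimum defining $\dH{u_1+u_2}{V}$.
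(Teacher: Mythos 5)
Your proof is correct and follows essentially the same route as the paper's: pick closest points $v_1^{*},v_2^{*}\in V$, use closure of $V$ under addition to make $v_1^{*}+v_2^{*}$ a valid candidate in the minimum, and finish with subadditivity of the Hamming weight. No gaps.
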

\begin{proof}
Let $u_1',u_2' \in V$ be the points in $V$ closest to $u_1$ and $u_2$ respectively. Since $u_1'+u_2' \in V$, we have
$$
d_{H}(u_1+u_2,V) \leq d_{H}(u_1+u_2,u_1'+u_2') = d_{H}(u_1+u_2,u_1'+u_2').
$$
Note that $d_{H}(u_1+u_2,u_1'+u_2')$ is the number of ones in $u_1+u_2+u_1'+u_2'$, which is at most the sum of the number of ones in $u_1+u'_1$  and $u_2+u'_2$. Therefore, 
\[d_{H}(u_1+u_2,u_1'+u_2') \leq d_{H}(u_1,u_1') + d_{H}(u_2,u_2')= d_{H}(u_1,V)+d_{H}(u_2,V). \hfill \qedhere
\]
\end{proof}

A simple counting argument establishes the existence of a point that is far away in Hamming distance from a collection of large sized sets.
\begin{lemma}\label{lem:subspace-distance}
Let $V_1,\ldots,V_k$ be subsets of $\F_2^\ell$, each of size at most $2^{\ell/2}$. If $k < 2^{\ell/4}$, then there is a vector $v\in \F_2^\ell$ such that the Hamming distance of $v$ from each $V_i$ is at least $\ell/16$.
\end{lemma}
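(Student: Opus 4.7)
The plan is to use a standard volume/counting argument combined with a union bound. For each set $V_i$, I would bound the number of vectors $v \in \F_2^\ell$ that are \emph{too close} to $V_i$, namely those with $\dH{v}{V_i} < \ell/16$. Such a vector lies in the union $\bigcup_{u \in V_i} B(u, \ell/16 - 1)$ of Hamming balls of radius less than $\ell/16$ centered at points of $V_i$. The number of these close vectors is therefore at most $|V_i|$ times the volume of a Hamming ball of radius $\ell/16$.

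Next I would plug in the given bounds. The volume of the Hamming ball of radius $\ell/16$ is $\sum_{i=0}^{\ell/16} \binom{\ell}{i}$, which by Proposition \ref{prop:binom}(2) is at most $2^{\ell/4}$. Combined with $|V_i| \leq 2^{\ell/2}$, the number of vectors too close to a single $V_i$ is at most $2^{\ell/2} \cdot 2^{\ell/4} = 2^{3\ell/4}$. Taking a union over all $k$ sets, the total number of vectors that are too close to at least one $V_i$ is at most $k \cdot 2^{3\ell/4}$.

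Finally, the assumption $k < 2^{\ell/4}$ gives $k \cdot 2^{3\ell/4} < 2^\ell = |\F_2^\ell|$, so there must exist some $v \in \F_2^\ell$ that is not within distance $\ell/16$ of any $V_i$. This $v$ satisfies $\dH{v}{V_i} \geq \ell/16$ for every $i$, as desired.

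No step is really an obstacle — the whole argument is a one-shot counting bound once the ball-volume estimate from Proposition \ref{prop:binom}(2) is in hand. The only place one has to be mildly careful is in matching the strict/non-strict inequalities ($\dH{v}{V_i} < \ell/16$ vs. $\leq \ell/16 - 1$) so that the binomial sum truly terminates at index $\ell/16$, which is exactly the regime covered by Proposition \ref{prop:binom}(2).
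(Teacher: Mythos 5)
Your proposal is correct and is essentially identical to the paper's own proof: both bound the number of vectors within distance $\ell/16$ of a single $V_i$ by $|V_i|\cdot 2^{\ell/4} \leq 2^{3\ell/4}$ using Proposition~\ref{prop:binom}(2), then union-bound over the $k$ sets and compare with $2^\ell$. No differences worth noting.
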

\begin{proof}
For every $i\in [k]$, define $\mathsf{B}(V_i,\ell/16) = \left| \left\{v \in \F_2^\ell \relmiddle|  \dH{v}{V_i} < \ell/16 \right\}\right|.$
For any  $u\in V_i$, the number of vectors in $\F_2^\ell$ at a distance less than $\ell/16$ from $u$ is at most $\sum_{j=0}^{\ell/16} \binom{\ell}{j} \leq 2^{\ell/4}$, where the inequality follows from \Cref{prop:binom}.
Hence $\mathsf{B}(V_i,\ell/16) \leq |V_i| \cdot 2^{\ell/4} = 2^{3\ell/4}.$
Since
$$
\sum_{i=1}^{k} \mathsf{B}(V_i,\ell/16) \leq k \cdot 2^{3\ell/4} < 2^\ell,
$$
there is a $v\in \F_2^\ell$ such that $\dH{v}{V_i} \geq \ell/16$ for every $i \in [k]$.
\end{proof}

\subsection{Proof of \Cref{thm:rank1-rigidity}}
Let $V$ be any $r'$-dimensional subspace of $\F_2^n$, where $r' \geq r$ is the smallest positive integer divisible by $\sqrt{n}$. 
We first define the inverse of $\ve(\cdot)$. For every $v\in \F_2^n$, define $\m(v)$ to be the matrix obtained by splitting $v$ into $\sqrt{n}$ length consecutive blocks and stacking each of these blocks to form a $\sqrt{n} \times \sqrt{n}$ matrix. Formally, $\m(v) \in \F_2^{\sqrt{n} \times \sqrt{n}}$ is such that $\m(v)[a,b] = v[(a-1)\sqrt{n}+j]$ for every $a,b \in [\sqrt{n}]$. Note that $\ve(\m(v)) = v$.

We provide a brief outline of the proof of \Cref{thm:rank1-rigidity}. The first step of the proof is to produce a vector in $v$ that is at a distance of $\Omega(n)$ from $V$ and $\m(v)$ is low rank. The rank being low is helpful as we can express $\m(v)$ as the sum of a small number of rank one matrices.  \Cref{lem:subspace-additive} will then imply the existence of a rank one matrix that is far away from $V$. 
If we only cared about the existence of a vector that is far away from $V$, \Cref{lem:subspace-distance} would suffice.
{To ensure that simultaneously the rank is small, we first project $V$ on to $n/2r'$ coordinates indexed by consecutive blocks each of length $2r'$.} Then we find a vector $v'\in \F_2^{2r'}$ that is far away from all the projections, which is still guaranteed by \Cref{lem:subspace-distance}. Concatenating $v'$ with itself $2r'$ times has the property that its corresponding matrix is low rank.

Let $k = \max\left\{ \floor{\frac{n}{2r'}},1\right\}$. The goal is to find a $v\in \F_2^n$ such that $\dH{v}{V} \geq k\cdot r'/8$ and the rank of $\m(v)$ is at most $2r'/\sqrt{n}$.
If $\floor{\frac{n}{2r'}} \geq 1$, then define $S_1,\ldots,S_{k}$ such that 
\[S_i = \left\{(i-1)\cdot 2r' + 1,\ldots, i \cdot 2r'\right\}\] 
for $i\in [k]$; otherwise, define $S_1 = [n]$. By definition, the dimension of $V_{S_i}$ is at most $r' = |S_i|/2$, for every $i\in [k]$. Since $r'\geq \sqrt{n}$ and $n \geq 1024$, we can infer that $k \leq 2r'$ and $2r' < 2^{r'/2}$. 
\Cref{lem:subspace-distance} implies the existence of a $v' \in \F_2^{2r'}$ with the property that $\dH{v'}{V_{S_i}} \geq r'/8$ for every $i \in [k]$. Now define $v\in \F_2^n$ by
$$
v[i]=
\begin{cases}
v'\left[i \ \mathrm{mod}\ 2r'\right]&\text{if }i \leq k \cdot 2r' \text{ and } i \ \mathrm{mod}\ 2r'\neq 0 ,\\
v'\left[2r'\right] &\text{if }i \leq k \cdot 2r' \text{ and } i \ \mathrm{mod}\  2r'=0,\\
0 &\text{if } i >2kr',
\end{cases}
$$
for all $i\in [n]$. In words, $v$ is the length $n$ vector that is the concatenation of $k$ copies of $v'$ along with the vector of zeros of length $n-2kr'$.
By the choice of $v$, 
we get that,
$$
\dH{v}{V} \geq \sum_{i=1}^{k} \dH{v}{V_{S_i}} \geq k \cdot r'/8 .
$$
Moreover, the rank of $\m(v)$ is at most $\frac{2r'}{\sqrt{n}}.$
Therefore we can express
\[\m(v) = \sum_{i=1}^{2r'/\sqrt{n}} a_i \trans{b_i}, \]
for some $a_1,b_1,\ldots,a_{\frac{2r'}{\sqrt{n}}}, b_{\frac{2r'}{\sqrt{n}}} \in \F_2^{\sqrt{n}}$. 
By \Cref{lem:subspace-additive}, we know that 
\[\dH{v}{V} \leq \sum_{i=1}^{2r'/\sqrt{n}} \dH{\ve(a_i\trans{b_i})}{V}.\]
Hence there exists an $i\in \left[\frac{2r'}{\sqrt{n}}\right]$ such that $\dH{\ve(a_i\trans{b_i})}{V} \geq \frac{\sqrt{n}\cdot k}{16} \geq \frac{n^{3/2}}{64r'}$. The observation that $r' \leq 2\max\{\sqrt{n},r\}$ completes the proof of the theorem.

\begin{remark}
[Extension to strong rigidity] 
Alon and Cohen~\cite{AlonC15} defined the notion of \emph{strong} rigidity; a set $Q \subseteq \F_2^{n}$ is $(r,t)$-strongly rigid if for every subspace of $\F_2^{n}$ of dimension at most $r$, the average distance of all the points to the subspace is at least $t$. For strong rigidity, the best lower bounds {known} for explicit sets are also of the form given in \Cref{eqn:rigidity-bestlb}. We can show that $\Upsilon$ is $(r,t)$-strongly rigid with $t \geq \Omega\left(\frac{n^{3/2}}{\max\{\sqrt{n},r\}}\right)$, matching the {best strong rigidity bounds known for explicit sets}. We sketch the proof here. We know that 
\[ \trans{u}M v+\trans{(u+e_i)} M v +  \trans{u}M(v+e_j)+
\trans{(u+e_i)}M(v+e_j) = \trans{b_i}Mb_j,\]
where $u,v\in \F_2^{\sqrt{n}}$ and $e_1,\ldots,e_{\sqrt{n}}$ are standard basis vectors in $\F_2^{\sqrt{n}}$. This fact can be used to prove that the matrix $M_\Upsilon$ corresponding to the set $\Upsilon$ is a generator matrix of a 4-query {locally decodable code} that {tolerates} a constant fraction of errors. A result of \cite[Theorem 6]{DvirGW19} shows that \Cref{thm:rank1-rigidity} and the locally decodable code property of $M_\Upsilon$ imply the strong rigidity of $\Upsilon$. 
\end{remark}
\section{Cell Probe Lower Bounds for the $\umv$  Problem}
\label{sec:lb-proofs}

%
We know of two techniques for proving cell probe lower bounds matching \Cref{eqn:ds-bestlb}. One is a technique of P{\v a}tra{\c s}cu and Thorup~\cite{PatrascuT10} who combined the communication complexity simulation of Miltersen~\cite{Miltersen94} with multiple queries on the same input data. 
The other is the technique we use, which is based on cell sampling. Cell sampling typically requires one to work with large sized fields in order to handle errors. This large field size is needed to encode a large subset of the correctly computed queries using a small subset of cells.
Here, we avoid encoding the subset of queries by a reduction to one-way communication complexity.

\paragraph{Proof outline for \Cref{thm:uMv-general}.} 
By Yao's min-max principle, it suffices to prove a lower bound on deterministic data structures. The hard distribution on the input data $M$ and query $u\trans{v}$ we use is given by sampling $M,(u,v)$ uniformly and independently at random. 
We prove the theorem by contradiction, and we start by assuming that the query time is small. The proof is carried out in three steps.
First, modify the data structure so that for every $M$, the fraction of queries correctly computed is at least $1/2$. This modification only increases the query time and space by $1$, and it can only increase the overall probability of the query algorithm being correct.
Second, for a given $M$, we use a variant of cell sampling (see \Cref{lem:cellsampling}) to obtain a small subset of cells $S$ and a large subset of queries $Q'$ such that all queries in $Q'$ can be computed by only accessing cells in $S$. 
Moreover, 
\begin{align*}
&\Pr\left[\text{query algorithm correctly computes }\trans{u}Mv \relmiddle| u\trans{v}\in Q'\right] \\
&\approx \Pr\left[\text{query algorithm correctly computes }\trans{u}Mv\right].
\end{align*} 
Third, we show that $S$ can be used to design an efficient protocol for the following communication game: Alice's input is $M$ and Bob's input is $u\trans{v}$, and the goal is for Bob to correctly compute $\trans{u}Mv$ on a sufficiently good fraction of the inputs after receiving a message from Alice. 

We now describe the protocol (see \Cref{prot:ds-simulation}). Alice sends the locations and contents of $S$. This ensures that Bob correctly computes $\trans{u}Mv$ on a large fraction of queries in $Q'$. 
Alice also communicates the majority value of $\trans{u}Mv$ for $u\trans{v} \notin Q'$ so that Bob is correct on half of his possible inputs that are not in $Q'$. 
Overall, Bob's output is correct on a sufficiently good fraction of all $M,(u,v)$. 
Since we have assumed that the query time is small, we are able to show that Alice's communication is small. 
This contradicts a lower bound on the communication complexity of this game. More precisely, we prove the following lower bound.
%
\begin{theorem}
	\label{thm:uMv-oneway}
	Suppose that Alice gets a uniformly random matrix $M\in \F_2^{\sqrt{n} \times \sqrt{n}}$ as input and Bob receives a uniform pair $(u,v) \in \F_2^{\sqrt{n}} \times \F_2^{\sqrt{n}}$ as input. If Alice sends a deterministic message to Bob and Bob computes $\trans{u}Mv$ such that 
	\[\Pr_{M,u,v}[\text{Bob computes }\trans{u}Mv \text{ correctly}] \geq \frac{1}{2} + \frac{1}{2^{\sqrt{n}/8}},\]
	then Alice must communicate at least $n/10$ bits.
\end{theorem}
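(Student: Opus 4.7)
The plan is to execute a high-moment Parseval-style argument: a second moment alone yields only $c = \Omega(\sqrt n)$, so I use the $(2m)$-th moment of Bob's correlation with $m$ of order $\sqrt n$. The combinatorial heart of the argument is a sharp upper bound on the number of $2m$-tuples of rank-one matrices that sum to zero over $\F_2$. Fix any deterministic one-way protocol with Alice sending $c$ bits; let $m(\cdot)\in\{0,1\}^c$ denote Alice's message function, and for each message $\mu$ let $f_\mu:\F_2^{\sqrt n}\times\F_2^{\sqrt n}\to\{0,1\}$ denote Bob's output function. With the $\pm 1$ encodings $g_\mu(u,v):=(-1)^{f_\mu(u,v)}$ and $h_M(u,v):=(-1)^{\trans{u}Mv}$, the success probability equals $\tfrac{1}{2}+\tfrac{1}{2}\Expect{D_{m(M)}(M)}$, where $D_\mu(M):=\Ex{u,v}{g_\mu(u,v)\,h_M(u,v)}$. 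It thus suffices to show that $\Expect{D_{m(M)}(M)}\geq 2\cdot 2^{-\sqrt n/8}$ forces $c\geq n/10$.

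The moment calculation is the crux. For each $\mu$ and any integer $m\geq 1$, expanding $D_\mu(M)^{2m}$ and integrating out $M$ yields
\[
\Expect{D_\mu(M)^{2m}} = \Ex{u_1,v_1,\ldots,u_{2m},v_{2m}}{\prod_{i=1}^{2m} g_\mu(u_i,v_i)\cdot \mathbf{1}\!\left[\sum_{i=1}^{2m} u_i v_i^\intercal = 0\right]} \leq \Pr\!\left[\sum_{i=1}^{2m} u_i v_i^\intercal = 0\right],
\]
since $|g_\mu|\leq 1$. I estimate this probability by conditioning on $r:=\dim\operatorname{span}(u_1,\ldots,u_{2m})$: given the $u_i$'s, the constraint $\sum_i u_i v_i^\intercal=0$ decouples across the $\sqrt n$ columns, each forcing the $j$-th components of $(v_1,\ldots,v_{2m})$ into the $(2m-r)$-dimensional kernel of $[u_1|\cdots|u_{2m}]$ and contributing a factor $2^{\sqrt n(2m-r)}$. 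By standard subspace counting, the number of $u$-tuples with rank $r$ is at most $2^{r(\sqrt n+2m-r)}$, so after normalizing by $2^{4m\sqrt n}$ one obtains
\[
\Pr\!\left[\sum_i u_i v_i^\intercal = 0\right] \leq \sum_{r=0}^{\min(2m,\sqrt n)} 2^{r(2m-r)-2m\sqrt n} \leq O(\sqrt n)\cdot 2^{m^2-2m\sqrt n},
\]
using that $r(2m-r)$ is maximized at $r=m$ whenever $m\leq\sqrt n$.

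Setting $\epsilon:=2^{-\sqrt n/8}$ and applying Markov's inequality yields $\Pr_M[|D_\mu(M)|\geq\epsilon] \leq \Expect{D_\mu(M)^{2m}}/\epsilon^{2m} \leq O(\sqrt n)\cdot 2^{m^2-7m\sqrt n/4}$. Minimizing the exponent over $m\in[1,\sqrt n]$ selects $m^*=\lceil 7\sqrt n/8\rceil$, giving a per-message tail bound of $2^{-49n/64+O(\sqrt n)}$. Union-bounding over the at most $2^c$ messages, the fraction of $M$'s for which some $\mu$ has $|D_\mu(M)|\geq\epsilon$ is at most $2^{c-49n/64+O(\sqrt n)}$. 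Splitting $\Expect{D_{m(M)}(M)}$ according to whether $|D_{m(M)}(M)|\leq\epsilon$ gives $\Expect{D_{m(M)}(M)} \leq \epsilon + 2^{c-49n/64+O(\sqrt n)}$; demanding this exceed $2\epsilon$ forces $c\geq 49n/64-O(\sqrt n)\geq n/10$ for all $n$ above an absolute constant.

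The main obstacle is the combinatorial estimate: the rank-conditioning over $\F_2$ must be tracked carefully so that the dimension counts on $u$-tuples and on $v$-column-kernels combine into the clean bound $O(\sqrt n)\cdot 2^{m^2-2m\sqrt n}$. Once that bound is in hand, the optimization over $m$ is routine; the conceptual point is that letting $m$ grow like $\sqrt n$ (rather than the naive choice $m=1$) is exactly what upgrades the trivial $\Omega(\sqrt n)$ lower bound to the desired $\Omega(n)$.
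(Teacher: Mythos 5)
Your proof is correct, but it takes a genuinely different route from the paper's. The paper proves the bound via the two-way discrepancy method applied to a $k$-fold XOR version of the game (Bob receives $k=\sqrt{n}$ pairs and outputs $\sum_i \trans{u_i}Mv_i$), together with an amplification step: the single-query protocol is converted into a protocol for the XOR game, the per-$M$ bias multiplies across the $k$ instances, and convexity (plus a sign-bit trick so that every $\Ex{u,v}{Z_M(u,v)}$ is nonnegative) transfers the average advantage. Its combinatorial core is the identity $\Ex{u,v}{(-1)^{\trans{u}Mv}}=2^{-\mathrm{rank}(M)}$ together with a count of low-rank matrices $M$. You instead fix Alice's message $\mu$, bound the $2m$-th moment of Bob's correlation $D_\mu(M)$ by the probability that $2m$ random rank-one matrices sum to zero (counted by conditioning on the rank of the $u$-tuple), apply Markov, and union bound over the $2^c$ messages; this avoids the direct-sum game, the amplification step, and the sign trick entirely, and it exploits one-wayness directly rather than passing through rectangles. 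The two moment computations are essentially dual to one another, but the architectures differ. Your route is more self-contained and yields a stronger constant ($c\geq 49n/64-O(\sqrt{n})$ versus the paper's $n/10$), while the paper's route produces the intermediate multi-query lemma, which is reusable for other values of $k$. Two minor points to tighten: the count of $u$-tuples of rank $r$ carries an extra $2^{O(\sqrt{n})}$ multiplicative slack (so the prefactor is $2^{O(\sqrt{n})}$, not $O(\sqrt{n})$), which your exponent bookkeeping absorbs anyway; and your conclusion holds only for $n$ above an absolute constant, which is harmless here since for small $n$ the hypothesis is vacuous (it requires success probability exceeding $1$) and the paper's downstream application already assumes $n\geq 36$.
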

Previously, in the \emph{randomized} two-way communication setting,  Chattopadhyay, Kouck\'{y}, Loff, and Mukhopadhyay \cite{ChattopadhyayKLM18} proved a lower bound for the game given in \Cref{thm:uMv-oneway}. Their lower bound implies the lower bound in \Cref{thm:uMv-oneway} against randomized protocols. We need a lower bound against \emph{deterministic} protocols under the {\em uniform distribution} on the inputs, and we cannot use their theorem as a black-box. We provide a straightforward proof of \Cref{thm:uMv-oneway} in \Cref{sec:uMv-oneway} by using the \emph{discrepancy method} on a related communication game (resembling a direct sum, where Bob receives multiple inputs).

\paragraph{Preliminaries.} Before presenting the proof of \Cref{thm:uMv-general}, we define some notation. For a real valued function $f$ with a finite domain $X\times Y$, $\Ex{x,y}{f(x,y)} = \frac{1}{|X|\cdot |Y|} \cdot \sum_{x \in X, y\in Y} f(x,y)$. Similarly, for $X'\subseteq X$, $\Ex{x,y}{f(x,y) \relmiddle| x\in X'} = \frac{1}{|X'|\cdot |Y|} \cdot \sum_{x\in X', y\in Y} f(x,y)$. 
An argument in the proof of \Cref{thm:uMv-general} requires an upper bound on the number of bits to encode the contents and locations of a subset of the cells, which is given by the following proposition.
\begin{proposition}
	\label{prop:encoding-cells}
	Let $S$ be a subset of the cells of a data structure with word length $w$ and size $s$. Then, the contents and locations of $S$ can be encoded in $|S|\cdot w + |S|\cdot \log \frac{es}{|S|}$ bits.
\end{proposition}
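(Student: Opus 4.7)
The proposition is essentially a two-line counting argument, so my plan is to split the encoding into two pieces: specifying which cells are in $S$, and then listing their contents. The contents portion is immediate: since each of the $s$ cells holds a word of $w$ bits, writing out the contents of the chosen $|S|$ cells takes exactly $|S|\cdot w$ bits.

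For the locations, the number of $|S|$-subsets of the $s$ cells is $\binom{s}{|S|}$, so a prefix-free index for the subset can be described with $\lceil \log \binom{s}{|S|}\rceil$ bits. To match the form stated in the proposition, I would invoke \Cref{prop:binom}(1) (already established earlier in the paper) with $\ell = s$ and $k = |S|$, which gives
\[
\log \binom{s}{|S|} \;\le\; |S|\cdot \log \frac{es}{|S|}.
\]
Adding the two contributions yields the claimed bound of $|S|\cdot w + |S|\cdot \log \frac{es}{|S|}$ bits.

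There is no real obstacle here; the only mild subtlety is the integer rounding when converting $\log \binom{s}{|S|}$ into a bit length, but for the applications in \Cref{thm:uMv-general} the bound is used asymptotically, so absorbing a $+1$ into the constant is harmless. I would state the encoding scheme explicitly (first write the $\lceil\log\binom{s}{|S|}\rceil$-bit index of the subset $S$ under some fixed enumeration of $|S|$-subsets of $[s]$, then concatenate the $w$-bit contents of the cells in the canonical order induced by this enumeration) so that the decoding procedure is clear, and then finish by applying \Cref{prop:binom}(1).
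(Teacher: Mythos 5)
Your proof is correct and follows exactly the same route as the paper's: $|S|\cdot w$ bits for the contents, $\log\binom{s}{|S|}$ bits for the locations, and \Cref{prop:binom}(1) to bound the latter by $|S|\cdot\log\frac{es}{|S|}$. The extra remarks on prefix-freeness and rounding are fine but not needed.
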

\begin{proof}
	Since each cell stores $w$ bits, the number of bits to encode the contents is $|S|\cdot w$. Since the total number of cells is $s$, the locations can be encoded in $\log \binom{s}{|S|} \leq |S|\cdot \log \frac{es}{|S|}$ bits, where the inequality followed from \Cref{prop:binom}.
\end{proof}

\subsection{Proof of \Cref{thm:uMv-oneway}}
\label{sec:uMv-oneway}

We start by discussing a slightly related problem, whose solution will lead to the proof strategy used here. Let $M\in \F_2^{\sqrt{n}\times \sqrt{n}}$, $v\in \F_2^{\sqrt{n}}$, and $e_1,\ldots,e_{\sqrt{n}}$ be the standard basis vectors in $\F_2^{\sqrt{n}}$. Consider the communication game in which Alice gets as input a uniform random $M$ and Bob gets as input a uniform random pair $(e_i,v)$.
Bob's goal is to compute $\trans{e_i}Mv$ after receiving a message from Alice. To understand how much Alice has to communicate, it is natural to look at the problem where Bob computes $\sum_{i=1}^{\sqrt{n}} \trans{e_i}Mv_i$, where $v_1,\ldots,v_{\sqrt{n}} \in F_2^{\sqrt{n}}$. Now observe that this sum is the same as the trace of $\left(\sum_{i=1}^{\sqrt{n}}e_i\trans{v}\right)M$, which in turn is the inner product between two $n$-bit vectors. 
The communication complexity of the inner product between two $n$-bit vectors is very well understood.
Therefore, the lower bound on the amount of communication to compute the inner product between two $n$-bit vectors translates to a lower bound to the problem of computing $\trans{e_i}Mv$. This strategy applied to our setting gives us the following lower bound, which will be used to prove \Cref{thm:uMv-oneway}. Our presentation closely follows \cite[Chapter 5]{RaoY19}.
\begin{lemma}
\label{lem:uMv-multiplequery}
Let $0 < \epsilon \leq 1/2$ and let $k$ be an integer.
Alice gets a uniformly random $M\in \F_2^{\sqrt{n}\times \sqrt{n}}$ as input and Bob receives $k$ uniform pairs $\left(u_1,v_1\right),\ldots,\left(u_{k},v_{k}\right)\in \F_2^{\sqrt{n}} \times \F_2^{\sqrt{n}}$ as input. Assume that Alice communicates a deterministic message to Bob, and Bob computes 
$\sum_{i=1}^{k} \trans{u_i}Mv_i$ with  
\begin{align*}
\Pr_{M,u_1,v_1,\ldots,u_{k},v_{k}}\left[\text{Bob computes }\sum_{i=1}^{k} \trans{u_i}Mv_i \text{ correctly}\right] \geq  \frac{1}{2}+\epsilon.
\end{align*}
If $k\leq \sqrt{n}$, then Alice must communicate at least $9k\sqrt{n}/40 - \log (1/\epsilon)$ bits.
\end{lemma}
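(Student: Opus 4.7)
The plan is to apply the discrepancy method, following the standard treatment of the inner product function in \cite[Chapter~5]{RaoY19}. One-way deterministic protocols are a special case of two-way ones, so it suffices to lower bound the two-way complexity: if a deterministic protocol of communication $c$ computes $f(M,y) := \sum_{i=1}^{k}\trans{u_i}Mv_i$ correctly on the uniform distribution with advantage $\epsilon$, where $y = (u_1,v_1,\ldots,u_k,v_k)$, then $c \geq \log\!\bigl(2\epsilon/\mathrm{disc}_{\mathrm{unif}}(f)\bigr)$. Setting $G[M,y] := (-1)^{f(M,y)}$, the standard spectral bound yields $\mathrm{disc}_{\mathrm{unif}}(f) \leq \|G\|_{\mathrm{op}}/\sqrt{|X|\cdot|Y|}$ with $|X| = 2^n$ and $|Y| = 2^{2k\sqrt{n}}$, so the task reduces to bounding $\|G\|_{\mathrm{op}}^2 = \|G\trans{G}\|_{\mathrm{op}}$.

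The core calculation is the entry structure of $G\trans{G}$. Writing $N(y) := \sum_i u_i\trans{v_i}$ and $A := M - M'$, the entry $(G\trans{G})[M,M'] = \sum_y (-1)^{\inner{A}{N(y)}}$ factors across the $k$ pairs as
\[
\prod_{i=1}^{k}\; \sum_{u,v \in \F_2^{\sqrt{n}}} (-1)^{\trans{u}Av} \;=\; 2^{k(2\sqrt{n}-\mathrm{rank}(A))},
\]
using the per-factor identity $\sum_{u,v}(-1)^{\trans{u}Av} = 2^{\sqrt{n}}\cdot |\ker A| = 2^{2\sqrt{n}-\mathrm{rank}(A)}$ (fix $v$, sum first over $u$, and observe the inner sum vanishes whenever $Av\neq 0$). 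So every entry of $G\trans{G}$ depends only on $A = M - M'$, making the matrix covariant under addition on $\F_2^{\sqrt{n}\times\sqrt{n}}$; since its entries are nonnegative, its spectral radius equals the common row sum $\|G\trans{G}\|_{\mathrm{op}} = \sum_A 2^{k(2\sqrt{n}-\mathrm{rank}(A))}$. Stratifying by $r := \mathrm{rank}(A)$ and using the Gaussian-binomial estimate $\#\{A : \mathrm{rank}(A) = r\} \leq 4\cdot 2^{r(2\sqrt{n}-r)}$, I would bound this by $4\cdot 2^{2k\sqrt{n}}\sum_{r=0}^{\sqrt{n}} 2^{r(2\sqrt{n}-k-r)}$. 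The quadratic exponent in $r$ is maximized at $r^\star = \sqrt{n} - k/2$ with value $(\sqrt{n}-k/2)^2 = n - k\sqrt{n} + k^2/4$, and summing over $r$ costs only an $O(\sqrt{n})$ factor.

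Dividing by $\sqrt{|X|\cdot|Y|} = 2^{n/2 + k\sqrt{n}}$ yields $\mathrm{disc}_{\mathrm{unif}}(f) \leq O(n^{1/4})\cdot 2^{-k\sqrt{n}/2 + k^2/8}$, which, using $k \leq \sqrt{n}$, is at most $O(n^{1/4})\cdot 2^{-3k\sqrt{n}/8}$. Plugging back into the discrepancy inequality gives $c \geq \tfrac{3}{8}k\sqrt{n} - \log(1/\epsilon) - O(\log n)$, which dominates $\tfrac{9}{40}k\sqrt{n} - \log(1/\epsilon)$ because $\tfrac{3}{8} - \tfrac{9}{40} = \tfrac{3}{20}$ leaves enough slack to absorb the $O(\log n)$ correction (and in the regime $9k\sqrt{n}/40 \leq \log(1/\epsilon)$ the claim is vacuous). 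The main technical hurdle is the operator-norm computation itself: recognizing that $G\trans{G}$ is diagonalized by additive characters on $\F_2^{\sqrt{n}\times\sqrt{n}}$, reducing its spectral radius to a clean rank-stratified sum, and balancing the two competing contributions $k\sqrt{n}$ and $k^2/4$ to the exponent. The gap between my $\tfrac{3}{8}$ and the lemma's $\tfrac{9}{40}$ constant is precisely the budget needed to absorb polylogarithmic slack from the rank-counting bound; otherwise only standard discrepancy machinery is required.
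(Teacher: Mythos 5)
Your proposal is correct and follows essentially the same route as the paper: both are discrepancy arguments that reduce to bounding $\Ex{A}{2^{-k\cdot \mathrm{rank}(A)}}$ --- your spectral bound on $G\trans{G}$ computes exactly the quantity the paper reaches via Cauchy--Schwarz over rectangles, using the same identity $\Ex{u,v}{(-1)^{\trans{u}Av}} = 2^{-\mathrm{rank}(A)}$ --- and then both finish by counting matrices of each rank. The only substantive difference is that your stratification keeps the $-r^2$ term and optimizes over $r$, which is tighter than the paper's single threshold at rank $9\sqrt{n}/20$ and yields a better exponent ($3k\sqrt{n}/8$ versus $9k\sqrt{n}/40$), at the price of an $O(n^{1/4})$ prefactor that your constant slack absorbs for all but trivially small $n$ and $k$.
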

\begin{proof}
We use the discrepancy method to prove the communication lower bound. This requires upper bounding the discrepancy of the  \emph{communication matrix} under a given distribution. Let $R$ be a rectangle of the communication matrix, which is defined by indicator functions $A_R$ and $B_R$ such that $\left(M,\left(\left(u_1,v_1\right),\ldots,\left(u_{k},v_{k}\right)\right)\right)$ is in the rectangle $R$ if and only if $A_R(M) = 1$ and $B_R\left(\left(u_1,v_1\right),\ldots,\left(u_{k},v_{k}\right)\right) = 1$. 

Consider the distribution in which $M,\left(u_1,v_1\right),\ldots,\left(u_{k},v_{k}\right)$ are chosen at random uniformly and independently. We upper bound the discrepancy under this distribution. In other words, we claim that for every rectangle $R$,
\begin{align}
\Ex{M,\left(u_1,v_1\right),\ldots,\left(u_{k},v_{k}\right)}{A_R(M)B_R\left(\left(u_1,v_1\right),\ldots,\left(u_{k},v_{k}\right)\right) (-1)^{\sum_{i=1}^{k} \trans{u_i}Mv_i }} \leq 2\cdot 2^{-9k\sqrt{n}/40}.
\label{eqn:discrepancy}
\end{align}
By a standard relation in communication complexity between the number of bits communicated and discrepancy of rectangles (see \cite[Chapter 5, Theorem 5.2]{RaoY19}), \Cref{eqn:discrepancy} implies that Alice must communicate at least $9k\sqrt{n}/40 - \log(1/\epsilon)$ bits.
We are left with the proof of \Cref{eqn:discrepancy}.
\begin{align*}
&\left(\Ex{\left(u_1,v_1\right),\ldots,\left(u_{k},v_{k}\right)}{B_R\left(\left(u_1,v_1\right),\ldots,\left(u_{k},v_{k}\right)\right) \Ex{M}{A_R(M)(-1)^{\sum_{i=1}^{k} \trans{u_i}Mv_i }}}\right)^2 \\ 
&\leq \Ex{\left(u_1,v_1\right),\ldots,\left(u_{k},v_{k}\right)}{ B_R\left(\left(u_1,v_1\right),\ldots,\left(u_{k},v_{k}\right)\right)^2\left(\Ex{M}{A_R(M) (-1)^{\sum_{i=1}^{k} \trans{u_i}Mv_i }}\right)^2} \\
&\leq \Ex{\left(u_1,v_1\right),\ldots,\left(u_{k},v_{k}\right)}{\left(\Ex{M}{A_R(M) (-1)^{\sum_{i=1}^{k} \trans{u_i}Mv_i }}\right)^2}.
\end{align*}
where the first inequality follows from convexity and the second one follows from the fact that $B_R\left(\left(u_1,v_1\right),\ldots,\left(u_{k},v_{k}\right)\right) \leq 1$.
Now 
\begin{align*}
&\Ex{\left(u_1,v_1\right),\ldots,\left(u_{k},v_{k}\right)}{\left(\Ex{M}{A_R(M) (-1)^{\sum_{i=1}^{k} \trans{u_i}Mv_i }}\right)^2}\\
&\leq \Ex{\left(u_1,v_1\right),\ldots,\left(u_{k},v_{k}\right), M,M'}{A_R(M)A_R(M') (-1)^{\sum_{i=1}^{k} \trans{u_i}Mv_i +\sum_{i=1}^{k} \trans{u_i}M'v_i }} \\
&\leq \Ex{M,M'}{\left|\Ex{\left(u_1,v_1\right),\ldots,\left(u_{k},v_{k}\right)}{(-1)^{\sum_{i=1}^{k} \trans{u_i}(M+M')v_i} }\right|} \\
&= \Ex{M}{\left|\Ex{(u,v)}{(-1)^{\trans{u}Mv}}\right|^{k}},
\end{align*}
where the last equality follows from the fact that $(u_1,v_1),\ldots,(u_k,v_k)$ are chosen independent of each other and $M+M'$ is uniformly distributed as $M$ and $M'$ are chosen uniformly and independently at random.
We are left with upper bounding $\Ex{M}{\left|\Ex{(u,v)}{(-1)^{\trans{u}Mv}}\right|^{k}}$. First note that if $M$ has rank $r$, then $\Ex{u,v} {(-1)^{\trans{u}Mv}} = 2^{-r}$. This is because,
	\begin{align*}
	\Ex{u,v} {(-1)^{\trans{u}Mv}} = \frac{1}{2^{\sqrt{n}}} \cdot \left(\sum_{v:Mv=0} 1 \right)+\frac{1}{2^{\sqrt{n}}} \cdot \left(\sum_{v:Mv\neq 0} \Ex{u}{(-1)^{\trans{u}Mv}} \right) 
	= \frac{2^{\sqrt{n}-r}}{2^{\sqrt{n}}} + 0 = 2^{-r}.
	\end{align*}
	In addition, $\Pr_M\left[\text{rank of }M \leq 9\sqrt{n}/20\right]\leq 2^{-9n/10}$. Indeed, the number of matrices in $\F_2^{\sqrt{n}\times \sqrt{n}}$ of rank at most $k$ is at most 
	\[\binom{2^{\sqrt{n}}}{k} \cdot \left(2^{k}\right)^{\sqrt{n}} \leq 2^{2k\sqrt{n}}.\]	
	Therefore, using the law of total expectation, we have that
	\begin{align*}
	\Ex{M}{\left|\Ex{(u,v)}{(-1)^{\trans{u}Mv}}\right|^{k}}\leq \Pr_M\left[\text{rank of }M \leq 9\sqrt{n}/20\right] + 2^{-9k\sqrt{n}/20} \leq 2\cdot 2^{-9k\sqrt{n}/20},
	\end{align*}
	where the last inequality followed from the fact that $k \leq \sqrt{n}$.
\end{proof}

\begin{proof}[Proof of \Cref{thm:uMv-oneway}] Let $c$ be the number of bits communicated by Alice. We show that $c > n/10$.
Define $Z_M(u,v) = 1$ if Bob correctly computes $\trans{u}Mv$ and $Z_M(u,v) = -1$ otherwise. 
By the definition of $Z_M(u,v)$ and the lower bound on the probability of Bob's computation being correct, we have that $\Ex{M,u,v}{Z_M(u,v)} \geq 2\cdot 2^{-\sqrt{n}/8}$.

We note that it is without loss of generality that $\Ex{u,v}{Z_M(u,v)}\geq 0$ for every $M\in \F_2^{\sqrt{n}\times \sqrt{n}}$. This is because Alice on input $M$ can send an extra bit indicating whether $\Ex{u,v}{Z_M(u,v)}<0$ and Bob will flip his output accordingly. 

We now use the given protocol to design a protocol for a new communication game: Suppose that Alice gets a uniformly random $M\in \F_2^{\sqrt{n}\times \sqrt{n}}$ as input and Bob receives $\sqrt{n}$ uniform pairs $(u_1,v_1),\ldots,(u_{\sqrt{n}},v_{\sqrt{n}})\in \F_2^{\sqrt{n}} \times \F_2^{\sqrt{n}}$ as input. 
We will use \Cref{lem:uMv-multiplequery} with $k=\sqrt{n}$ to obtain the  desired lower bound on $c$.

We claim that there is a communication protocol in which Alice communicates $c$ bits and Bob computes $\sum_{i=1}^{\sqrt{n}} \trans{u_i}Mv_i$ such that 
\begin{align}
\Pr_{M,u_1,v_1,\ldots,u_{\sqrt{n}},v_{\sqrt{n}}}\left[\text{Bob computes }\sum_{i=1}^{\sqrt{n}} \trans{u_i}Mv_i \text{ correctly}\right] \geq  \frac{1}{2}+ \frac{2^{\sqrt{n}-1}}{2^{n/8}}.
\label{eqn:prob_amp}
\end{align}
Alice's message is same as before, and Bob computes each of $\trans{u_i}Mv_i$ separately and outputs the sum modulo $2$. We now prove \Cref{eqn:prob_amp}.
For a fixed $M$, the probability that Bob correctly computes $\sum_{i=1}^{\sqrt{n}} \trans{u_i}Mv_i $ is $\frac{1}{2}\left(1 + \left(\Ex{u,v}{Z_M(u,v)}\right)^{\sqrt{n}}\right)$. Therefore the overall probability that Bob correctly computes $\sum_{i=1}^{\sqrt{n}} \trans{u_i}Mv_i $ is at least 
\[\frac{1}{2}\left(1 + \frac{\sum_{M} (\Ex{u,v}{Z_M(u,v)})^{\sqrt{n}}}{2^n}\right) \geq \frac{1}{2}\left(1 + \left(\Ex{M,u,v}{Z_M(u,v)}\right)^{\sqrt{n}}\right) \geq \frac{1}{2}+ \frac{2^{\sqrt{n}-1}}{2^{n/8}}, \]
where the first inequality follows from convexity of the function $f(x) = x^k$ with $k=\sqrt{n}$.
Applying \Cref{lem:uMv-multiplequery} with $k=\sqrt{n}$ implies that $c > n/10$, which completes the proof of the theorem.
\end{proof}

\subsection{Proof of \Cref{thm:uMv-general}}
\label{sec:uMv-general}

If $n< 36$, the theorem is vacuously true as $c \leq 1/36$. For the rest of the argument we will assume that $n \geq 36$.
We prove a lower bound on the query time $t$ against deterministic data structures with space $s$ and word size $w$. Suppose that the input data $M$ and query $u\trans{v}$ is given by choosing $M,u,v$ uniformly and independently at random, and the query algorithm is guaranteed to satisfy
\[\Pr_{M,u,v}\left[\text{query algorithm computes }\trans{u}Mv \text{ correctly}\right] \geq \frac{1}{2} + 2^{-\sqrt{n}/16}.\] 
By Yao's minmax principle, this will imply a lower bound on randomized data structures. 

We first modify the given data structure to ensure that for every $M\in \F_2^{\sqrt{n}\times \sqrt{n}}$, the probability that $\trans{u}Mv$ is correctly computed is at least $1/2$.
Assume that we have a data structure with query time $t'$, space $s'$ and word size $w$. The modified data structure stores an extra bit indicating whether the
$\Pr_{u,v}\left[\text{query algorithm computes }\trans{u}Mv \text{ correctly}\right]$ is less than $1/2$ or not for a given $M$. The query algorithm is the same as before, but accesses this extra bit to flip the output if it is set to $1$. Clearly, the new data structure has query time $t = t'+1$, space $s = s'+1$ and word size $w$. Moreover, under this modification, we have
\begin{itemize}
\item $\Pr_{M,u,v}\left[\text{query algorithm computes }\trans{u}Mv \text{ correctly}\right] \geq 1/2 + 2^{-\sqrt{n}/16}$.
\item $\Pr_{u,v}\left[\text{query algorithm computes }\trans{u}Mv \text{ correctly}\right] \geq 1/2$ for every $M$.
\end{itemize}
In the rest of the proof, we work with this modification and show that  $t \geq \Omega\left(\min\left\{ \frac{n}{\beta}, \frac{\sqrt{n}}{\log \frac{s\beta}{n}}\right\}\right)$, where $\beta = 2(w+\log sw/n)$. Observe that $\beta \leq n/256$; otherwise the lower bound is vacuous.

Assume by contradiction that $t\leq \min\left\{ \frac{n}{256\beta}, \frac{\sqrt{n}}{256 \log \frac{s\beta}{n}}\right\}$. 
Define $Z_M(u,v)=1$ if the query algorithm correctly computes $\trans{u}Mv$, and $-1$ otherwise. We have
\begin{align}
\Ex{M,u,v}{Z_M(u,v)} = 2\cdot \Pr_{M,u,v}\left[\text{query algorithm computes }\trans{u}Mv \text{ correctly}\right] - 1 \geq 2\cdot 2^{-\sqrt{n}/16}. \label{eqn:ZMguarantee}
\end{align}
Note that $\Ex{M,u,v}{Z_M(u,v)}$ captures the \emph{advantage} or \emph{bias} of the data structure - it is much more convenient to work with the advantage than the probability of the query algorithm being correct.

The following lemma, a variant of cell sampling, guarantees the existence of a small subset $S$ of cells such that a large number of queries $Q'$ can be computed by only accessing $S$, and $\Ex{u,v}{Z_M(u,v) \mid u\trans{v}\in Q'} \approx \Ex{u,v}{Z_M(u,v)}$.
\begin{lemma}
	\label{lem:cellsampling}
	Let $M\in \F_2^{\sqrt{n}\times \sqrt{n}}$. Define $Q_1 = \left\{u\trans{v}\relmiddle| Z_M(u,v) = 1\right\}$ and $Q_2 = \left\{u\trans{v}\relmiddle| Z_M(u,v) = -1\right\}$.
	If $t\leq \min\left\{ \frac{n}{256\beta}, \frac{\sqrt{n}}{256 \log \frac{s\beta}{n}}\right\}$, then there exits a subset of cells $S$, and subsets $Q_1'\subseteq Q_1$ and $Q_2' \subseteq Q_2$ such that 
	\begin{enumerate}
		\item $|S| = \ceil{\frac{n}{128\beta}}$,
		\item $\Pr_{u,v}\left[u\trans{v}\in Q'_1\right] - \Pr_{u,v}\left[u\trans{v}\in Q'_2\right] \geq \Ex{u,v}{Z_M(u,v)} \cdot 2^{-\sqrt{n}/16}$, 
		\item $Q_1' \cup Q_2'$ is the set of all queries computed by accessing cells only in $S$.
	\end{enumerate}
\end{lemma}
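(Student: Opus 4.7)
The plan is a standard cell-sampling argument: pick a random subset $S \subseteq [s]$ of size $\sigma := \ceil{n/(128\beta)}$ and show that, in expectation, a sufficiently large fraction of the signed quantity $\Ex{u,v}{Z_M(u,v)}$ is preserved when we restrict attention to queries whose probes lie entirely in $S$. Fix $M$; this determines a map $q \mapsto P(q) \subseteq [s]$ recording the cells probed on query $q$. Without loss of generality $|P(q)| = t$ for every $q$, since the algorithm may always probe exactly $t$ cells and discard any extra information. Draw $S$ uniformly at random from the subsets of $[s]$ of size $\sigma$.

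The key observation is that $\Pr_S[P(q) \subseteq S] = p := \binom{s-t}{\sigma - t}/\binom{s}{\sigma}$ is a constant independent of $q$. By linearity of expectation,
$$\Ex{S}{\,\Ex{u,v}{Z_M(u,v)\cdot \mathbf{1}[P(u\trans{v}) \subseteq S]}\,} \;=\; p\cdot \Ex{u,v}{Z_M(u,v)},$$
so some specific $S$ of size $\sigma$ attains at least this value. Define $Q'_i := \{u\trans{v} \in Q_i : P(u\trans{v}) \subseteq S\}$ for $i = 1, 2$. By construction $Q'_1 \cup Q'_2$ is precisely the set of queries answerable using only cells in $S$ (giving conclusions (1) and (3)), and the left side of the display equals $\Pr_{u,v}[u\trans{v} \in Q'_1] - \Pr_{u,v}[u\trans{v} \in Q'_2]$, so conclusion (2) reduces to verifying $p \geq 2^{-\sqrt{n}/16}$.

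For the final step, note that $t \leq n/(256\beta)$ and $\sigma \geq n/(128\beta) = 2t$, so each of the $t$ factors in $p = \prod_{i=0}^{t-1}(\sigma - i)/(s - i)$ is at least $\sigma/(2s)$, giving
$$-\log p \;\leq\; t\bigl(1 + \log(s/\sigma)\bigr) \;=\; t\bigl(8 + \log(s\beta/n)\bigr).$$
Combining the two halves of the hypothesis on $t$, namely $t\log(s\beta/n) \leq \sqrt{n}/256$ (from $t \leq \sqrt{n}/(256\log(s\beta/n))$) and $t \leq n/(256\beta)$, bounds the right-hand side by $\sqrt{n}/16$ after treating the regimes of large and small $\log(s\beta/n)$ separately. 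The main obstacle is purely in balancing constants: $\sigma$ must be small enough that the eventual communication protocol in the proof of \Cref{thm:uMv-general} can transmit the contents and locations of $S$ cheaply via \Cref{prop:encoding-cells}, while being large enough that $p$ beats the high-error threshold $2^{-\sqrt{n}/16}$. The particular exponents in the hypothesis on $t$ and the choice of $\beta = 2(w + \log(sw/n))$ are tuned precisely so that both demands are met.
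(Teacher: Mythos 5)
Your proposal is correct and follows essentially the same route as the paper: a uniformly random $S$ of size $\ceil{n/(128\beta)}$, linearity of expectation to show the signed advantage survives restriction to queries probing only $S$ with the factor $p=\binom{s-t}{|S|-t}/\binom{s}{|S|}$, an averaging step to fix a good $S$, and the same product lower bound on $p$ using $t\leq |S|/2$ together with the two halves of the hypothesis on $t$ to get $p\geq 2^{-\sqrt{n}/16}$. The only difference is cosmetic (you bound each factor by $\sigma/(2s)$ where the paper uses $((|S|-t)/s)^t$), and your slight informality about balancing the constant $8$ against $\log(s\beta/n)$ mirrors an implicit step in the paper's own chain $2^{-t\log(256s\beta/n)}\geq 2^{-16t\log(s\beta/n)}$.
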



\begin{figure}[t]
\begin{algorithm}[H]
\KwIn{Alice's input is $M$ and Bob's input is $(u,v)$}
\KwOut{Alice communicates a deterministic message and Bob computes $\trans{u}Mv$.}
\vspace{2ex}
Let $Q_1 = \left\{u\trans{v}\relmiddle| Z_M(u,v)=1\right\}$ and  $Q_2 =\left\{u\trans{v}\relmiddle| Z_M(u,v)=-1\right\}$\;
Apply \Cref{lem:cellsampling} with $Q_1, Q_2$ to obtain a subset of cells $S$ and subsets $Q_1' \subseteq Q_1$ and $Q_2' \subseteq Q_2$\;
Let $b\in \{0,1\}$ be such that $\Pr_{u,v}\left[\trans{u}Mv=b\mid u\trans{v}\notin Q'\right] \geq \Pr_{u,v}\left[\trans{u}Mv=1-b\mid u\trans{v}\notin Q'\right]$, where $Q' = Q'_1 \cup Q'_2$\;
Alice communicates $b$ followed by locations and contents of $S$\;
\lIf{$u\trans{v} \in Q'$}{
Bob uses the query algorithm to compute $\trans{u}Mv$}
\lElse{Bob outputs $b$}
\NoCaptionOfAlgo
\end{algorithm}
\caption{One-way protocol on inputs $M,(u,v)$ computing $\trans{u}Mv$.}
\label{prot:ds-simulation}
\end{figure}

We move on to the final step of the proof of \Cref{thm:uMv-general}. What is left is to design a one-way protocol using the sets guaranteed by \Cref{lem:cellsampling}.
The protocol is described in  \Cref{prot:ds-simulation}. We will show the validity of this protocol by showing that both Alice and Bob know the subset $Q'$ of queries. Since Alice's input is $M$, she knows the contents of all the cells, which gives $S$. With regard to knowing $Q'$, the locations and contents of cells in $S$ suffice. This is because the query algorithm can be simulated on all queries to check if any cell outside of $S$ is being accessed. We are proving \Cref{thm:uMv-general} by contradicting \Cref{thm:uMv-oneway}, which
is achieved by the following.

\begin{lemma}
\label{lem:prot-guarantee}
The protocol in \Cref{prot:ds-simulation} has the following guarantees
(a) Alice communicates fewer than $n/10$ bits, and
(b) $\Pr_{M,u,v}\left[\text{Bob computes }\trans{u}Mv \text{ correctly}\right] \geq 1/2 + 1/2^{\sqrt{n}/8}.$
\end{lemma}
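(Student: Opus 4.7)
The plan is to handle the two parts in order, with (a) being a bookkeeping calculation that exploits the definition of $\beta$, and (b) being a three-way case analysis that converts property~2 of \Cref{lem:cellsampling} into a bias for Bob's output.

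For part (a), Alice's message consists of a single bit $b$ followed by the locations and contents of the cells in $S$. By \Cref{prop:encoding-cells}, this latter costs $|S|\cdot w + |S|\cdot \log\frac{es}{|S|}$ bits, so the total communication is $1 + |S|\bigl(w + \log\frac{es}{|S|}\bigr)$. The key observation is that since $\beta = 2(w + \log\frac{sw}{n})$, we have $w \leq \beta/2$ and $\log\frac{s}{n} \leq \log\frac{sw}{n}\leq \beta/2$. Using $|S|\leq \frac{n}{128\beta}+1 \leq \frac{n}{64\beta}$ (valid since $\beta\le n/256$ and $n\ge 36$), one gets $\log\frac{s}{|S|} \leq \log\frac{s}{n} + \log(128\beta) \leq \beta/2 + O(\log\beta) \leq \beta$, hence $w + \log\frac{es}{|S|} \leq 2\beta$. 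Substituting, the total bit count is at most $1 + |S|\cdot 2\beta \leq 1 + \frac{n}{32} + 2\beta \leq n/10$, using $\beta\leq n/256$ and $n\ge 36$. I expect the main nuisance here is getting these constants to line up cleanly, which may require some slack, but no conceptual difficulty.

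For part (b), split on whether $u\trans{v}$ lies in $Q_1'$, $Q_2'$, or $Q'^{c}$. When $u\trans{v}\in Q_1'$, Bob recovers the data structure's answer by simulating it on the probes inside $S$ (this is consistent by property~3 of \Cref{lem:cellsampling}), and by definition of $Q_1$ this answer equals $\trans{u}Mv$. When $u\trans{v}\in Q_2'$, the same simulation yields the wrong answer. When $u\trans{v}\notin Q'$, Bob outputs the deterministic bit $b$, which by the definition of $b$ is correct with probability at least $1/2$ under the conditional distribution. Hence, for every fixed $M$,
\begin{align*}
\Pr_{u,v}[\text{Bob correct}\mid M] &\geq \Pr_{u,v}[u\trans{v}\in Q_1'] + \tfrac{1}{2}\Pr_{u,v}[u\trans{v}\notin Q']\\
&= \tfrac{1}{2} + \tfrac{1}{2}\bigl(\Pr_{u,v}[u\trans{v}\in Q_1'] - \Pr_{u,v}[u\trans{v}\in Q_2']\bigr).
\end{align*}
By property~2 of \Cref{lem:cellsampling}, the parenthetical quantity is at least $\Ex{u,v}{Z_M(u,v)}\cdot 2^{-\sqrt{n}/16}$, which is non-negative since the modification from \Cref{sec:uMv-general} ensures $\Ex{u,v}{Z_M(u,v)}\ge 0$ for every $M$. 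Averaging over $M$ and invoking \Cref{eqn:ZMguarantee}, we get
\[
\Pr_{M,u,v}[\text{Bob correct}] \geq \tfrac{1}{2} + \tfrac{1}{2}\cdot 2^{-\sqrt{n}/16}\cdot \Ex{M,u,v}{Z_M(u,v)} \geq \tfrac{1}{2} + 2^{-\sqrt{n}/8},
\]
which is the desired bound.

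The hardest part is really just verifying the constants in part (a); once $\beta$ is unpacked the rest is mechanical, and part (b) is essentially determined by the conclusions of \Cref{lem:cellsampling} together with the choice of the tie-breaking bit $b$.
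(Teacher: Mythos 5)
Your proposal is correct and follows essentially the same route as the paper: part (a) is the same application of \Cref{prop:encoding-cells} with the same unpacking of $\beta$ (your constants are slightly loose in the same way the paper's are, but the slack is ample), and part (b) is the paper's computation of $\Ex{u,v}{Z'_M(u,v)}$ rewritten directly in terms of $\Pr[Q_1']-\Pr[Q_2']$ and the majority bit $b$, followed by the identical averaging over $M$ via \Cref{eqn:ZMguarantee}. No substantive differences.
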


Now, we need to prove \Cref{lem:cellsampling,lem:prot-guarantee} to complete the proof of \Cref{thm:uMv-general}.

\begin{proof}[Proof of \Cref{lem:cellsampling}]
	Let $S$ be a uniformly random subset of the cells of size $|S| = \ceil{\frac{n}{128\beta}}$. Define $D(u,v,S) = Z_M(u,v)$ if the query algorithm only accesses cells in $S$ to compute $\trans{u}Mv$; otherwise $D(u,v,S) = 0$.
	By linearity of expectation,
	\begin{align*}
	\Ex{u,v,S}{D(u,v,S)} = \Ex{u,v}{Z_M(u,v)} \cdot \frac{\binom{s-t}{|S|-t}}{\binom{s}{|S|}} &=\Ex{u,v}{Z_M(u,v)} \cdot \frac{|S| \cdot (|S|-1)\cdots (|S|-t+1)}{s \cdot (s-1)\cdots (s-t+1)} \\
	&\geq \Ex{u,v}{Z_M(u,v)} \cdot \left(\frac{|S|-t}{s}\right)^{t}.
	\end{align*}
	Recall that $|S| \geq \frac{n}{128\beta}$ and $t \leq \frac{n}{256\beta}$. Moreover, $\beta = 2(w+\log sw/n) \geq 2$. This implies that 
	\[ \left(\frac{|S|-t}{s}\right)^{t} \geq 2^{-t\cdot \log \frac{256s\beta}{n}} \geq 2^{-16t\cdot \log \frac{s\beta}{n}}.\] 
	So we get $\Ex{u,v,S}{D(u,v,S)} \geq \Ex{u,v}{Z_M(u,v)} \cdot 2^{- 16 \cdot t\cdot \log\frac{s\beta}{n}}$.
	Therefore, there exists an $S$ such that 
	\[\Ex{u,v}{D(u,v,S)} \geq \Ex{u,v}{Z_M(u,v)} \cdot 2^{- 16 \cdot t\cdot \log\frac{s\beta}{n}} \geq \Ex{u,v}{Z_M(u,v)} \cdot 2^{-\sqrt{n}/16},\]
	where the last inequality follows from the fact that $16 \cdot t\cdot \log\frac{s\beta}{n} \leq  \sqrt{n}/16$.  Setting 
	\[Q_1' = \left\{u\trans{v}\in Q_1 \mid D(u,v,S) = 1\right\} \text{ and }  Q_2'=\left\{u\trans{v}\in Q_2\mid D(u,v,S) = -1\right\}\] 
	completes the proof of the lemma.
\end{proof}

\begin{proof}[Proof of  \Cref{lem:prot-guarantee}]
We first prove part (a). Recall that $\beta = 2\left(w + \log \frac{sw}{n}\right)$.
Let $c$ be the number of bits communicated by Alice. By \Cref{prop:encoding-cells} and the definition of $\beta$,
\begin{align*}
c &\leq 1 + \ceil{\frac{n}{128\beta}}\cdot w + \ceil{\frac{n}{128\beta}} \cdot \log \frac{128e \cdot s\beta}{n} \\
&= 1 +  \ceil{\frac{n}{128\beta}} \cdot \left(w+\log \frac{s\beta}{n}\right) +\ceil{\frac{n}{128\beta}} \cdot \log 128e.
\end{align*}
Since $\beta \geq 2w$, $\beta \geq 2\log \frac{s}{n}$ and $\beta \geq \log \beta$, we get that $w+\log \frac{s\beta}{n} \leq 2\beta$. Moreover, using the fact that $\ceil{\frac{n}{128\beta}} \leq \frac{n}{128\beta}+1$, $\beta \geq 2$ and $\beta \leq n/256$, we can say that 
\begin{align*}
c &\leq  1 + \frac{2n}{128} +2\beta+ \frac{n \log 128e}{128\beta} + \log 128e \\
&\leq 1 + \frac{2n}{128} + \frac{4.5n}{128(\beta/2)} + \frac{n}{128} + \log 128e\leq 10 + \frac{7.5n}{128} < \frac{n}{10},
\end{align*}
where the last inequality follows from $n \geq 36$.

We now prove part (b) of the claim. 
Define $Z'_M(u,v)=1$ if the Bob correctly computes $\trans{u}Mv$ and $Z'_M(u,v)=-1$ otherwise. The probability with which Bob correctly computes $\trans{u}Mv$ is given by $\left(1+\Ex{M,u,v}{Z'_M(u,v)}\right)/2$. We will show that $\Ex{M,u,v}{Z'_M(u,v)} \geq 2 \cdot 2^{-\sqrt{n}/8}$, which will imply that the probability of being correct is at least $1/2 + 2^{-\sqrt{n}/8}$. 

Let $Q_1, Q_2, Q'_1, Q'_2,$ and $Q'$ be as defined in the protocol in \Cref{prot:ds-simulation}. We first establish some properties about these sets. We know that $\Pr_{u,v}[u\trans{v}\in Q_1] - \Pr_{u,v}[u\trans{v}\in Q_2]  = \Ex{u,v}{Z_M(u,v)}$. Moreover, the application of  \Cref{lem:cellsampling} in the protocol is valid since  $t\leq \frac{n}{256\alpha}$, and hence 
\begin{align}
\Pr_{u,v}\left[u\trans{v}\in Q'_1\right] - \Pr_{u,v}\left[u\trans{v}\in Q'_2\right] \geq \Ex{u,v}{Z_M(u,v)} \cdot 2^{-\sqrt{n}/16}. \label{eqn:fraction-adv}
\end{align}
Since Bob can simulate the query algorithm on $Q'$ by accessing only $S$, which is guaranteed by \Cref{lem:cellsampling}, we have
\begin{align*}
\Ex{u,v}{Z'_M(u,v)} &= \Pr_{u,v}\left[u\trans{v}\in Q'\right]\cdot \left(\Pr_{u,v}\left[u\trans{v}\in Q_1' \mid u\trans{v}\in Q'\right] -\Pr_{u,v}\left[u\trans{v}\in Q_2' \mid u\trans{v}\in Q'\right]\right) \\
&\quad  + \Pr_{u,v}\left[u\trans{v} \notin Q'\right] \left(\Pr_{u,v}\left[\trans{u}Mv=b \mid u\trans{v}\notin Q'\right]-\Pr_{u,v}\left[\trans{u}Mv=1-b \mid u\trans{v}\notin Q'\right]\right) \\
&\geq \left(\Pr_{u,v}\left[u\trans{v}\in Q_1'\right] -\Pr_{u,v}\left[u\trans{v}\in Q_2'\right] \right) \geq \Ex{u,v}{Z_M(u,v)} \cdot 2^{-\sqrt{n}/16},
\end{align*}
where the first inequality follows from the choice of $b$ and the second inequality used \Cref{eqn:fraction-adv}. 

To conclude, 
\begin{align*}
\Ex{M,u,v}{Z'_M(u,v)} = \Ex{M}{\Ex{u,v}{Z'_M(u,v)}} &\geq \Ex{M}{\Ex{u,v}{Z_M(u,v)} \cdot 2^{-\sqrt{n}/16}} \\
&= \Ex{M}{\Ex{u,v}{Z_M(u,v)}} \cdot 2^{-\sqrt{n}/16} \\
&= \Ex{M,u,v}{Z_M(u,v)} \cdot 2^{-\sqrt{n}/16} \geq 2\cdot 2^{-\sqrt{n}/8},
\end{align*}
where the last inequality follows from
\Cref{eqn:ZMguarantee}.
\end{proof}

%
%
\subsection*{Acknowledgments}
We thank Paul Beame, Sajin Koroth, Pavel Hrube\v{s}, Pavel Pudl{\' a}k, Anup Rao, Makrand Sinha, Amir Yehudayoff and Sergey Yekhanin for useful discussions. Special thanks to Paul, Anup, Makrand and Amir for the encouragement to write up these results.

\bibliographystyle{plain}
\bibliography{ref}
\end{document}